\newcommand{\citeInline}[1]{Ref.~[\onlinecite{#1}]}
\newcommand{\R}{\mathbb{R}}
\newcommand{\C}{\mathbb{C}}
\newcommand{\N}{\mathbb{N}}
\newcommand{\bS}{\mathbb{S}}
\newcommand{\Z}{\mathbb{Z}}
\newcommand{\Q}{\mathbb{Q}}
\renewcommand{\H}{\mathcal{H}}
\newcommand{\eps}{\varepsilon}
\newcommand{\vphi}{\varphi}
\def\ip{\@ifnextchar\bgroup{\@with}{\@without}}
\def\@with#1#2{\langle #1,#2 \rangle}
\def\@without{\langle \cdot,\cdot \rangle}
\newcommand{\tx}{{\tilde{x}}}
\newcommand{\ty}{{\tilde{y}}}
\renewcommand{\div}[1]{\nabla \cdot #1} 
\newcommand{\bz}{{\bar{z}}}
\newcommand{\bu}{{\bar{u}}}
\newcommand{\bZ}{{\bar{Z}}}
\newcommand{\bv}{\bar{v}}
\newcommand{\bw}{\bar{w}}
\newcommand{\hatt}{\hat{t}}
\newcommand{\hn}{\hat{n}}
\newcommand{\hb}{\hat{b}}
\newcommand{\ta}{\tilde{\alpha}}
\newcommand{\iotab}{{\iota\hspace{-0.6em}\raisebox{0.75pt}{\,--}}}
\newcommand{\Tee}{T}
\newcommand{\T}{\mathbb{T}}
\DeclareMathOperator{\Diff}{Diff}
\DeclareMathOperator{\Ima}{Im}
\DeclareMathOperator{\Ker}{Ker}
\DeclareMathOperator{\re}{re}
\DeclareMathOperator{\im}{im}
\DeclareMathOperator{\sech}{sech}
\DeclareMathOperator{\csch}{csch}
\theoremstyle{plain}
\newtheorem{thm}{Theorem}[section]
\newtheorem{proposition}[thm]{Proposition}
\newtheorem{corollary}[thm]{Corollary}
\newtheorem{lemma}[thm]{Lemma}
\theoremstyle{definition}
\newtheorem{definition}{Definition}
\begin{document}
\title{Normal Forms and Near-Axis Expansions for Beltrami Magnetic Fields}

\author{Nathan Duignan and James~D. Meiss}
\affiliation{Department of Applied Mathematics, University of Colorado, Boulder, CO, 80309-0526 USA}
\thanks{Email correspondance: nathan.duignan@colorado.edu}

\begin{abstract}
A formal series transformation to Birkhoff-Gustavson normal form
is obtained for toroidal magnetic field configurations in the neighborhood of a magnetic axis. Bishop's rotation-minimizing coordinates are used to obtain a local orthogonal frame near the axis in which the metric is diagonal, even if the curvature has zeros. 
We treat the cases of vacuum and force-free (Beltrami) fields in a unified way, noting that the vector potential is essentially the Poincar\'e-Liouville one-form of Hamiltonian dynamics, and the resulting magnetic field corresponds to the canonical two-form of a nonautonomous one-degree-of-freedom system. Canonical coordinates are obtained and Floquet theory is used to transform to a frame in which the lowest-order Hamiltonian is autonomous. The
resulting magnetic axis can be elliptic or hyperbolic, and resonant elliptic cases are treated.
The resulting expansion for the field is shown to be well-defined to all orders, and is explicitly computed to degree four. An example is given for an axis with constant torsion near a $1:3$ resonance.
\end{abstract}
\date{\today}

\maketitle


\section{Introduction}
The utility of a device for confining plasma by a magnetic field depends crucially on the geometry of the field,
especially for the case of toroidal confinement, like that in tokamaks or stellarators.
Any such configuration should ensure that plasma pressure and electromagnetic forces balance to obtain an MHD equilibrium. Additional considerations such as omnigenity or quasisymmetry are necessary to ensure good confinement of gyrating particles (see the review by \citeInline{Helander14}). For any of these desired properties, two crucial questions must be answered. Firstly, do magnetic fields with the desired property exist? If so, what is the topology, geometry, and dynamics of these fields? Many theoretical tools have been constructed to gain insight into these questions. One classical tool, which is enjoying a recent resurgence, is the near-axis expansion.

In essence, a near-axis expansion is a method for computing the magnetic field $B$ as a power series in distance from a \emph{magnetic axis}; a closed field line $r_0: \bS^1 \to \R^3$ of $B$ that is an isolated fixed point of its Poincar\'{e} first-return map. In the axisymmetric case, such an axis is a circle at the center of a nested family of toroidal magnetic surfaces, but more generally there may not be a smooth family of such surfaces. Two techniques have been developed for near-axis expansions: the direct and the inverse method. 

The direct method was pioneered by Mercier \cite{Mercier64} and Solov'ev and Shafronov \cite{solovevPlasmaConfinementClosed1970} for studying solutions to the force-balance equations
\begin{equation}\label{eq:forceBalance}
	J \times B = \nabla p ,\qquad \nabla\cdot B = 0,
\end{equation}
where $J = \nabla\times B$ is the current vector and $p$ is the (scalar) plasma pressure. The core idea is to use a Frenet-Serret frame based on $r_0$ to obtain what are now called Mercier coordinates $(\rho,\theta,s)\in\R^+\times \T^2$. In these coordinates the axis $r_0(s)$ is simply $\rho = 0$. All physical quantities are then expanded as formal power series in $\rho$ and \eqref{eq:forceBalance} is solved order by order.
Key goals of the direct method are to establish formal solutions to \eqref{eq:forceBalance} (or, perhaps equally as interestingly, imply their non-existence), and to compute an integral of the system $\psi$, e.g., the toroidal magnetic flux, in terms of the Mercier coordinates. The direct method is beneficial when no assumptions are made about the possible topology, geometry, or dynamics of $B$. Since the pioneering work, the direct method has been implemented by many authors \cite{lortzEquilibriumStabilityThreeDimensional1976,lortzEquilibriumStabilityEll1977,Bermardin85,bernardinIsodynamicalOmnigenousEquilibrium1986,salatNonexistenceMagnetohydrodynamicEquilibria1995,Weitzner16,Landreman18b,Landreman19d,Jorge20a,jorgeUseNearaxisMagnetic2021}.

In contrast, the inverse method, as used most prominently by Garren and Boozer \cite{Garren91a,Garren91b}, but appearing earlier in \citeInline{lortzEquilibriumStabilityThreeDimensional1976}, assumes the existence of special magnetic coordinates such as Boozer or Hamada coordinates \cite{Helander14}. These consist of an integral $\psi$ and a pair of angles $\theta,\phi$, so that the contravariant components of $B$, for example, depend only on $\psi$. The core aim of the inverse method is to determine the Euclidean coordinates $(x,y,z)$ as a series expansion in the magnetic coordinates. The benefit of this method is that it can efficiently provide expressions for physical quantities in terms of magnetic coordinates, and these, in turn are useful for further theoretical exploration. However, since the inverse method necessarily assumes the existence of magnetic coordinates, it implicitly assumes that the field line flow is integrable (see, for instance, \citeInline{burby2021integrability}). Conversely, if a nonvanishing magnetic field has toroidal flux surfaces, then, in the neighborhood of any flux surface, there exist magnetic coordinates \cite{kruskalEquilibriumMagneticallyConfined1958}; this construction can be extended to a neighborhood of an axis as well \cite{burby2021integrability}. Nevertheless, given a magnetic axis $r_0$, there may not exist a local, integrable field $B$. Indeed, an outstanding conjecture of Grad is that smooth solutions to \eqref{eq:forceBalance} do not exist for a general toroidal domain \cite{gradToroidalContainmentPlasma1967}.

In this paper we study the direct method for near-axis expansion using a Hamiltonian perspective. Any nonzero, divergence-free vector field can be locally written as a  non-autonomous $1\tfrac12$ degree Hamiltonian system \cite{Hazeltine03}. The true power of this Hamiltonian reformulation is that all information about the vector field $B$ is stored in a single function, the Hamiltonian $H$. Consequently, the dynamics of the field lines of $B$ can be understood through this single function. Moreover, the perspective lends itself to the many ideas and tools of Hamiltonian mechanics and more generally, of symplectic geometry. In this paper, we demonstrate the benefits of this view through novel applications of classical ideas of Hamiltonian mechanics to near-axis expansions.

A similar perspective was adopted by Bernardin, Moses, and Tataronis \cite{Bermardin85,bernardinIsodynamicalOmnigenousEquilibrium1986} to investigate magnetic fields satisfying \eqref{eq:forceBalance} assuming that $\nabla p \neq 0$.
In contrast to these papers, we treat \eqref{eq:forceBalance} under the assumption that $\nabla p = 0$ in a neighborhood of the axis. In this case, the current must be parallel to the field,
\begin{equation}
	\nabla\times B = k B.
\end{equation}  
When $k\neq 0 $ such a field is called \emph{Beltrami} (or \emph{force-free}); the \emph{vacuum} case corresponds to 
$k = 0$. In these cases, the field lines are generically chaotic, as was first emphasized by Arnold in the fluid context \cite{Arnold65b} 
(see also \citeInline{etnyreContactTopologyHydrodynamics2000a,encisoBeltramiFieldsExhibit2020,cardonaConstructingTuringComplete2021}).
Since flux surfaces do not generically exist, the inverse method cannot be used. The vacuum field case has been treated previously in \cite{Jorge20a}. There, the authors assume the existence of a magnetic potential $\phi$ such that $B = \nabla \phi$, which of course implies that $J = 0$. Instead, we consider the vector potential $A$ such that $B = \nabla\times A$, allowing, for the first time, a unified expansion for both Beltrami and vacuum fields. The expansion is given explicitly, to all orders, in \cref{prop:FormalSeriesSolution}.

Our work further differs from \citeInline{bernardinIsodynamicalOmnigenousEquilibrium1986} by recasting the expansion in terms of differential forms. A tutorial on differential forms specifically for plasma physics is given in \citeInline{MacKay20}. By translating the theory into the language of differential forms we reveal the intrinsic geometry of vacuum and Beltrami fields: they give $M$ the structure of a cosymplectic and contact manifold, respectively. For Beltrami fields, the utility of a contact structure was first understood by Etnyre and Ghrist \cite{etnyreContactTopologyHydrodynamics2000a} and since has been the source of many interesting results, most recently \citeInline{cardonaConstructingTuringComplete2021}. As far as we are aware, the result that vacuum fields are cosymplectic is novel. While the work here does not crucially depend on the understanding of these geometries, we believe that there can be further synergies between symplectic topology and plasma physics.

We will apply two useful tools from Hamiltonian mechanics: Floquet and normal form theory.  Floquet theory \cite{floquetEquationsDifferentiellesLineaires1883} is the study of time-periodic, linear differential equations
and was specialized to the Hamiltonian case by Moser \cite{Moser58b}. It provides a canonical coordinate system in which the linear system becomes autonomous, thus giving an efficient way to compute its stability.  In our context, the leading order terms in the near-axis expansion become independent of the toroidal angle, and the axis is revealed to be hyperbolic or elliptic. The Floquet transformation was implicitly computed in \citeInline{Mercier64,solovevPlasmaConfinementClosed1970,Bermardin85,Jorge20a} as a sequence of transformations based on geometric assumptions about the flux surfaces near the axis. 
As we will demonstrate, the composition of these transformations is indeed the Floquet transformation. 
An important result of Floquet theory is that when the axis is elliptic, its rotational transform, $\iotab_0$, is related to the torsion $\tau$ of the curve $r_0$; we will show this holds for the Beltrami case as well. Moreover, our results also hold for hyperbolic axes which have stable and unstable manifolds with ``expansivity'' $\nu_0$. Such configurations are of importance in the design of divertors \cite{Boozer15,Boozer18}. 

Normal form theory for Hamiltonian systems was pioneered by Birkhoff \cite{Birkhoff27}. The theory gives a way to compute ``simple'' coordinates in the neighborhood of a periodic orbit; a nice exposition is given in \citeInline{meyerIntroductionHamiltonianDynamical2009}. We will apply this technique to near-axis expansions. Essentially, normal form theory provides an iterative procedure to remove as many terms in a power series expansion of the Hamiltonian as possible. If the axis is elliptic and non-resonant, that is if $\iotab_0 \notin \Q$, or if the axis is hyperbolic, then normal form theory gives coordinates $(x,y,s)\in D^2 \times \bS^1$ so that $H$ is (formally) of the form $H(x^2+y^2)$ or $H(x y)$, respectively. If the axis is resonant with $\iotab_0 = p/q$ then Gustavson's normal form theory gives coordinates $(\rho,\theta,\phi)\in \R^+\times \T^2$ so that $H = \tfrac12\tfrac{p}{q} \rho^2 + K(\rho,q\theta + p\phi)$ \cite{Gustavson66}. In each case, $H$ is formally integrable: normal form theory provides both simple coordinates and an approximate integral. If the normal form series converges, these coordinates give a true integral, defining flux surfaces, even in the resonant case.

Our normal form results should be directly compared to previous work for the nonresonant elliptic case \cite{Bermardin85}; these authors compute an adiabatic invariant near the axis. Their method uses generating functions to implicitly give the coordinate transformation. 
A similar procedure was used in \citeInline{Jorge20a} to compute flux surfaces for a vacuum field; their flux coordinate $\psi$ is, in essence, the adiabatic invariant of Bernardin et al. As we will show, normal form theory applies to this case, but also applies to hyperbolic and resonant elliptic axes. Moreover, we will use a near-resonant normal form \cite{meyerIntroductionHamiltonianDynamical2009} to give approximate flux surfaces when the on-axis rotational transform $\iotab_0$ is near a low order rational. A key difference from the generating function method is our use of Lie series to compute the normalizing transformation, in line with \citeInline{dragtLieSeriesInvariant1976}. As they argued, the crucial benefit Lie series provide over the generating functions is efficiency as well as the ease of computing the inverse. 

The paper is outlined as follows. In \cref{sec:background}, Beltrami and vacuum fields are 
introduced through the lens of differential forms. This translation from vector 
calculus notation establishes the intrinsic geometry of vacuum and Beltrami fields.
In \cref{sec:FramingAxis} the magnetic axis is defined
and Bishop's coordinates \cite{Bishop75} are introduced.
These give Mercier coordinates without the assumption of non-vanishing curvature.
A further advantage of these coordinates is that the metric is diagonal.
In \cref{sec:FloquetAndNormalForm} the Hamiltonian formulation is given and 
the classical theory of Floquet and of normal forms, including the near-resonant case, is recalled. 
\cref{sec:ApplicationsToMagneticFields} contains the application to near-axis expansions
and the formal expansion for the Hamiltonian for Beltrami fields to all orders is found in \cref{prop:FormalSeriesSolution}.
Lastly, we apply the Floquet transformation and deconstruct it into the geometric transformations of previous work.
Finally, in \cref{sec:Examples} we give two examples of the normal form computation. Our examples use discrete symmetry to obtain closed curves. We apply this method to obtain a family of curves with constant torsion.
These examples are chosen so that the axis is elliptic and the on-axis rotational transform $\iotab_0$ is arbitrarily close to a $1:3$ resonance. The first example uses a regular normal form, while the second uses the near-resonant normal form. The calculated approximate integrals are then compared to the true field line dynamics. Future directions and concluding remarks are given in \cref{sec:conclusion}.
	
\section{Geometry of Vacuum and Beltrami fields}\label{sec:background}


In this paper we will consider a solid torus $D^2 \times \bS^1$ in $\R^3$, with the Euclidean metric and the standard volume form. However, the equations defining a vacuum or Beltrami field can be given for any three-manifold $M$, with arbitrary Riemannian metric $g$ and corresponding volume form $\Omega$. In this section we give this general description through the use of differential forms, which reveals their intrinsic geometry. A summary of the translation is given in \cref{tab:translations} and further exposition is given by MacKay \cite{MacKay20}.

Suppose that $M$ is an orientable three-dimensional manifold with metric $g$ and Riemannian volume form $\Omega$. Associated with any non-vanishing magnetic field $B$ on $M$ is the so-called \textit{flux form}; a two-form $\beta$ defined by taking the interior product of $B$ with the volume form
\begin{equation}\label{eq:beta}
	\beta := \iota_B \Omega.
\end{equation} 
The name follows from the fact that, given any two-dimensional surface $S$ in $M$, the magnetic flux through $S$ is given by $\int_S \beta$. 

The requirement that magnetic fields are divergence free, $\nabla \cdot B = 0$, can be restated in terms of the flux form $\beta$ as $d\beta = 0$, that is, that $\beta$ is closed. If $B$ is non-vanishing, it also follows that $\beta$ has maximal rank. Any two-form that is both closed and of maximal rank is called \emph{presymplectic} \cite{burby2021integrability}.
Conversely, as shown in \citeInline{burby2021integrability}, given any presymplectic form $\beta$, there exists a unique, non-vanishing, divergence-free vector field $B$ such that $\iota_B \Omega =\beta$.
Hence, the magnetic field $B$ and flux-form $\beta$ are dual views of the same mathematical object. 

With the metric $g$ in hand, there is a third view of a magnetic field. This is provided through the musical isomorphisms relating one-forms to vector fields, namely,
\[ B^\flat := \iota_B g = g(B,\cdot). \]
One can think of $B^\flat$ as the covariant representation of $B$. A useful relationship between $B^\flat$ and $\beta$ is given by the Hodge star operator.
In an arbitrary coordinate system $(x^1,x^2,x^3) \in M$, $B^\flat$ is the covariant representation of a magnetic field as a one-form:
\begin{equation}\label{eq:Bflat}
	B^{\flat} = B_i dx^i .
\end{equation}

The relationship between $B^\flat$ and $\beta$ is given through the Hodge star operator $\star$, which
provides an isomorphism between $k$-forms and $(3-k)$-forms. In local coordinates, the operator
is defined on two-forms $\beta$ as
\begin{equation}\label{eq:HodgeStarTwoForm}
	\beta = \tfrac12 \epsilon_{ijk} \beta^i dx^j \wedge dx^k \,\longmapsto\,
	\star\beta =  \frac{1}{\rho} g_{ij} \beta^j dx^i ,
\end{equation}
and on one-forms as
\begin{equation}\label{eq:HodgeStarOneForm}
	\alpha = \alpha_i dx^i \,\longmapsto\,
	\star \alpha = \tfrac12\epsilon_{ijk} \rho g^{il} \alpha_l dx^j \wedge dx^k ,
\end{equation}
where $\rho = \sqrt{\det g_{ij}}$. 
The correspondence between $B^\flat$ and $\beta$ is then
\begin{equation}\label{eq:HodgeFlat}
	\beta = \star B^\flat.
\end{equation}

It is well known for $M =\R^3$ that any divergence-free vector field has a vector potential $ A$: $B = \nabla\times A$. This result for differential forms becomes: since $\beta$ is closed, and all closed two-forms on $\R^3$ are exact, there is a primitive one-form $A^\flat = \alpha$ for $\beta$:
\begin{equation}\label{eq:BetaExact}
	 \beta = d\alpha.
\end{equation}
Using \eqref{eq:HodgeFlat} this is also written
\begin{equation}\label{eq:alpha}
	B^\flat = \star d \alpha.
\end{equation}
More generally, the vector potential exists for any manifold $M$ on which closed two forms are exact \cite{arnoldMathematicalAspectsClassical2006}.

Given some additional structure on $B$, e.g., if it obeys magneto-hydrostatics (MHS), is Beltrami, or is a vacuum field, then there is a corresponding geometric interpretation. To see this, firstly note that the current $J$,  defined by $J = \nabla \times B $, becomes $ \iota_J \Omega = dB^\flat$. If $B$ satisfies MHS, then there must exist $p$ such that $J\times B = \nabla p$, or equivalently
\[
	\iota_J \beta = -dp.
\]
In open regions where $dp \neq 0$, $(B,J,p)$ is an \textit{integrable presymplectic system}, see \citeInline{burby2021integrability} for details.

Alternatively, if $B$ is a vacuum field then $J = 0$, so $dB^\flat = 0$. Thus the one-form $B^\flat$ is closed and  $\beta \wedge B^\flat$ is a volume form on $M$. A manifold $M$ together with a presymplectic form $\beta$ and a closed one-form $\eta$ such that $\beta \wedge \eta $ is a volume form, is called a \textit{cosymplectic manifold}. It follows that $B$ is a vacuum magnetic field if $(M,\beta, B^\flat)$ is a \emph{cosymplectic structure} on $M$. This places vacuum fields in the realm of cosymplectic geometry.

Lastly, for a Beltrami (force-free) field $J = kB$. Translating to differential forms gives 
\begin{equation}\label{eq:Beltrami}
	dB^\flat = \iota_J \Omega = k \iota_B\Omega = k \beta.
\end{equation}
In this paper we will assume that $k$ is constant. A manifold $M$ together with an exact, presymplectic form 
$\beta = d\eta$, so that $\beta \wedge \eta$ is a volume form, is called a \textit{contact manifold}. 
Since \eqref{eq:Beltrami} implies that $\eta = k^{-1}B^\flat$, is indeed a primitive, 
$d\eta =   d(k^{-1}B^\flat) = \beta$, it follows that when $B$ is a Beltrami  
$(M,\beta, k^{-1} B^\flat)$ is a contact manifold. 
This places Beltrami fields in the realm of contact geometry.
	
We summarize these ideas as:
\begin{lemma}\label{lem:FieldConditions}
Suppose $B$ is a non-vanishing magnetic field on an orientable three-manifold $M$ with volume form $\Omega$. Let $\beta = \iota_B \Omega$. Then:
\begin{enumerate}
	\item $B$ is MHS if $ (M,\beta) $ is a presymplectic manifold, 
		$\iota_J \Omega = dB^\flat$ and $\iota_J\beta = -dp$;
	\item $B$ is a vacuum field if $(M,\beta,B^\flat)$ is a cosymplectic manifold; and
	\item $B$ is a Beltrami field if there exists $k\neq0$ such that $(M,\beta, k^{-1}B^\flat)$ 
		is a contact manifold.
\end{enumerate}
\end{lemma}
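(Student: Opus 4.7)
The plan is to show that each geometric hypothesis translates, via the musical isomorphisms and the identities $\beta = \iota_B \Omega$ and $\iota_J \Omega = dB^\flat$ already recorded earlier in the section, directly into the vector-calculus equations defining an MHS, vacuum, or Beltrami field. Two ingredients will be reused throughout: first, the bundle map $X \mapsto \iota_X \Omega$ from vector fields to two-forms on $M$ is an isomorphism, so any equation of two-forms of the shape $\iota_X \Omega = \iota_Y \Omega$ collapses to $X = Y$; second, by Cartan's formula $d(\iota_B \Omega) = \mathcal{L}_B \Omega = (\nabla\cdot B)\,\Omega$, so the closedness half of the presymplectic condition, $d\beta = 0$, is equivalent to $\nabla\cdot B = 0$. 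This solenoidal condition is built into all three hypotheses.

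For item (1), I combine the above with the remaining two hypotheses. The equation $\iota_J \Omega = dB^\flat$ is, under the isomorphism of the first ingredient and the musical raising $\flat^{-1}$, the intrinsic statement $J = \nabla\times B$ on a Riemannian three-manifold; in $\R^3$ with the Euclidean metric it reduces to the usual definition via \eqref{eq:HodgeStarOneForm} and \eqref{eq:HodgeStarTwoForm}, so it is consistent with the current in \eqref{eq:forceBalance}. To deal with $\iota_J \beta = -dp$ I would establish once and for all the pointwise identity
\begin{equation*}
\iota_J \iota_B \Omega \;=\; -(J\times B)^\flat,
\end{equation*}
which is a short computation in an oriented orthonormal frame. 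Substituting this into $\iota_J \beta = -dp$ and raising indices gives $J\times B = \nabla p$, which together with $\nabla\cdot B = 0$ is the MHS system \eqref{eq:forceBalance}.

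For items (2) and (3), the additional geometric data reduces to a single equation on $B^\flat$, and the argument is even shorter. In the cosymplectic case the hypothesis that $B^\flat$ is closed, combined with $\iota_J \Omega = dB^\flat$ and the injectivity of $X\mapsto \iota_X \Omega$, forces $J = 0$, i.e.\ $\nabla\times B = 0$; together with $\nabla\cdot B = 0$ this is the definition of a vacuum field. In the contact case, $\beta = d(k^{-1} B^\flat)$ and the constancy of $k$ give $dB^\flat = k\beta$; substituting $\iota_J \Omega = dB^\flat$ and $\beta = \iota_B \Omega$ and invoking injectivity once more yields $J = kB$, which is \eqref{eq:Beltrami}.

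The only nontrivial technical point is the cross-product identity $\iota_J \iota_B \Omega = -(J\times B)^\flat$ used in item (1); the remainder is a direct application of Cartan's formula, the non-degeneracy of contraction with $\Omega$, and the constancy of $k$. No existence, regularity, or convergence issues arise, since the proof merely unwinds each geometric hypothesis into its vector-calculus counterpart. The non-vanishing of $\beta\wedge B^\flat$ in the cosymplectic and contact cases is automatically compatible with the standing assumption that $B$ is non-vanishing and plays no role in deriving the field equations themselves, so it will be commented on only briefly.
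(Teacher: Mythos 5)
Your proposal is correct and follows essentially the same route as the paper, which establishes the lemma by the same dictionary of translations in the paragraphs preceding its statement ($d\beta=0 \Leftrightarrow \nabla\cdot B=0$, $\iota_J\Omega=dB^\flat \Leftrightarrow J=\nabla\times B$, $\iota_J\beta=-dp \Leftrightarrow J\times B=\nabla p$, and $dB^\flat=k\beta \Leftrightarrow J=kB$). You merely make explicit the supporting identities (Cartan's formula, injectivity of $X\mapsto\iota_X\Omega$, and $\iota_J\iota_B\Omega=-(J\times B)^\flat$) that the paper asserts without proof, all of which check out.
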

	
By viewing magnetic dynamics  in this way, one can not only instantly see the differing geometries of these three cases, but also the relationship between magnetic fields and Hamiltonian mechanics. 
This relationship will be used heavily below. 
This geometric view of magnetic fields is not new and many interesting properties of magnetic fields have already been uncovered through this perspective. See, for instance, \citeInline{etnyreContactTopologyHydrodynamics2000a,encisoBeltramiFieldsExhibit2020,cardonaConstructingTuringComplete2021}.
	
The Beltrami condition \eqref{eq:Beltrami} can be reformulated as a \textsc{pde} for 
the coefficients of the vector potential $\alpha$. Indeed, this will be used to compute the Hamiltonian.
Using \eqref{eq:BetaExact} with \eqref{eq:Beltrami} requires that
\[
	k d\alpha = d B^\flat,\qquad \iota_B \Omega = d\alpha .
\]
Then from \eqref{eq:HodgeFlat}, $B^\flat = \star d\alpha$, which implies
\begin{equation}\label{eq:BeltramiAlpha}
	 kB^\flat = \star d (k\alpha) = (\star d)^2 \alpha.
\end{equation}
Equivalently,  $(\star d - k)\alpha = \vartheta$ where $\vartheta$ is some closed one-form. However, note that it is not $\alpha$ but $d\alpha$ that defines the field-line dynamics: gauge freedom implies any closed one-form can be added to $\alpha$ without changing $B$. Thus without loss of generality, we could set $\vartheta=0$. Nevertheless, we will retain \eqref{eq:BeltramiAlpha} because, as will be seen, it is more useful to use the gauge freedom to select a desired form for $\alpha$.
	
From \cref{lem:FieldConditions}, the vacuum field case implies
\begin{equation}\label{eq:VacuumAlpha}
	d B^\flat = 0 \quad \implies \quad (\star d)^2 \alpha = 0.
\end{equation}
Of course, this is exactly the Beltrami equation \eqref{eq:BeltramiAlpha} with $k = 0$. This enables the simultaneous treatment of vacuum and Beltrami fields; simply treat the Beltrami case and then let $k \to 0$.		
	
Thus, the fundamental system to solve is \eqref{eq:BeltramiAlpha}. We will expand this \textsc{pde} in the neighborhood of a magnetic axis, order-by-order in the radius to obtain an explicit construction of a normal form and a relation to Hamiltonian dynamics in \cref{sec:ApplicationsToMagneticFields}. 

{\renewcommand{\arraystretch}{1.3} 
\begin{table}[htp]
\begin{center}
\begin{tabular}{l|c|c}
 				& Vector Calculus & Differential Forms \\
\hline
Metric			& $g^{ij} = \nabla x^i \cdot \nabla x^j$ 			& $ds^2 = g_{ij} dx^i dx^j$ \\
Volume			& $\rho = \sqrt{\det g_{ij}}$						& $\Omega = \rho dx^1 \wedge dx^2 \wedge dx^3$ = $\star 1$ \\
Covariant		& $\vec{B} = B_j \nabla x^j$ 						& $B^\flat = B_j dx^j$ \\
Contravariant 	& $B^i = \vec{B} \cdot \nabla x^i$  				& $B = B^i \partial_{x^i}$ \\
				& $\vec{B} = \tfrac12 \epsilon_{ijk} \rho B^i \nabla x^j\times \nabla x^k$	
										& $ \beta = \tfrac12 \epsilon_{ijk}\rho B^i dx^j\wedge dx^k $ \\			
Hodge Star		& $B_i = g_{ij} B^j$								& $B^\flat = \star \beta$  \\
Divergence 		& $\nabla \cdot \vec{B}$   							& $d \star B^\flat = d\iota_B \Omega$ \\
\hline
Flux			& $\vec{B} \cdot d^2 S$								& $\beta = \iota_B \Omega = \star B^\flat$ \\
Current			& $\vec{J} = \nabla \times \vec{B}$					& $J^\flat =  \star d B^\flat$ \\
Vector Potential& $\vec{B} = \nabla \times \vec{A}$					& $B^\flat = \star d \alpha$ ($A^\flat \equiv \alpha$) \\
\end{tabular}
\end{center}
\caption{Translations between vector calculus and differential forms.
We use the summation convention.
Note that that matrix $(g_{ij})$ is the inverse of $(g^{ij})$}
\label{tab:translations} 
\end{table}%
}

\section{Coordinates near a Magnetic axes}\label{sec:FramingAxis}

\subsection{Magnetic axes}
Magnetic axes are unavoidable in the study of plasma confinement since most containment
designs are based on toroidal geometry. Such a device must have an axis that is a closed field line.
In the simplest case this is the ``center'' of family of nested toroidal surfaces. However, any definition must not assume integrability and exclude closed field lines on rational tori.

Generally, suppose that
$r_0: \bS^1 \to \R^3$ is a closed field line of a nonzero, smooth magnetic field $B$.
Let $U = D^2 \times \bS^1$ be a tubular neighborhood of the axis and $\Sigma$ be some local section transverse to $B$ containing a point $z\in r_0$. The flow of $B$ produces a well-defined Poincar\'{e} first-return map $\pi_{\Sigma}:\Sigma\to\Sigma$ with a fixed point $z$. The local dynamics of the closed field line $r_0$ can be characterized by the dynamics of the map $\pi_\Sigma$. 
	
Using the Poincar\'e map we can exclude closed orbits on rational surfaces from our definition of a magnetic axis as follows.
\begin{definition}\label{def:MagneticAxis}
A closed field line $r_0:\bS^1 \to \R^3$ is a \emph{magnetic axis} if each point $z \in r_0$ is an isolated fixed point of its Poincar\'{e} first-return map, $\pi_\Sigma$. 
\end{definition}

This condition is coordinate independent and does not depend on the choice of section $\Sigma$. 
Indeed the flow of $B$ provides a conjugacy between the first-return maps on any pair of sections \cite{Meiss17a}.
		
However, as is sketched in \cref{fig:magAxisExam}, there could be several such axes, perhaps of differing local topology. In \cref{sec:normalFormsSetUp}, the notion of a \emph{degenerate} and \emph{nondegenerate} magnetic axes is defined. As will be seen, a nondegenerate axis must be elliptic or hyperbolic.
	
\begin{figure}[ht]
	\centering
	\includegraphics[width=0.6\linewidth]{./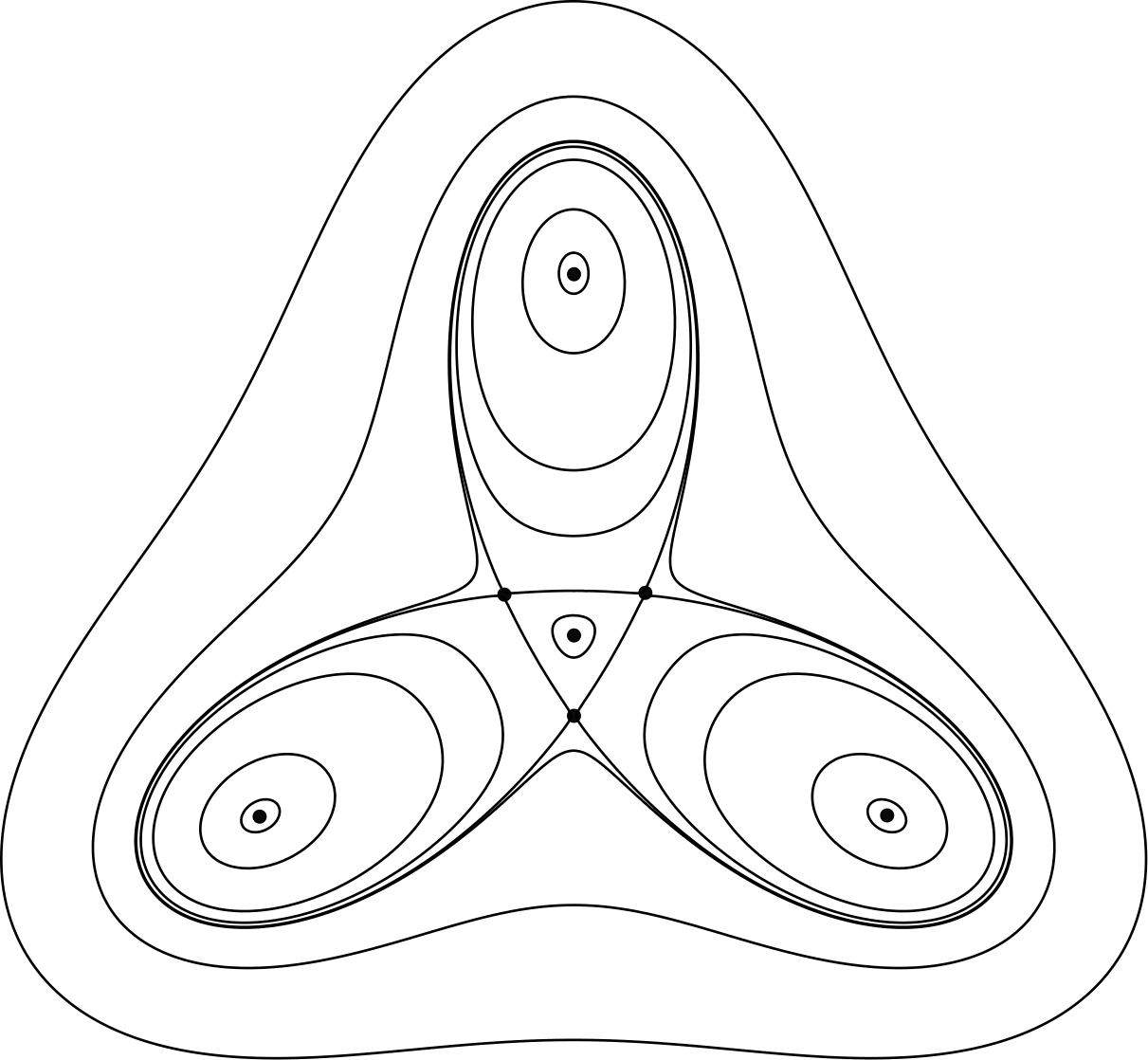}
	\caption{A global Poincar\'{e} section of an example magnetic field. There are seven ``magnetic axes'' 
		by  \cref{def:MagneticAxis}, four are elliptic and three, hyperbolic. 
		Perhaps only one might be called  the``central axis.''  }
	\label{fig:magAxisExam}
\end{figure}

\subsection{Framing a magnetic axis}

In order to understand the possible field line behavior in the neighborhood of a magnetic axis $ r_0(s) $, it is useful to have good coordinates defined in its neighborhood. As first demonstrated by Mercier \cite{Mercier64}, when $r_0 \in C^3([0,\Tee), \R^3)$ and the curvature of $r_0$ is non-vanishing, these can be provided through the Frenet-Serret \emph{moving frame} (see, for instance, \citeInline{Bishop75}). Specifically, when $s$ is the arc length, define the unit tangent, $\hatt(s) = r_0'$, normal $\hn(s)$, and binormal $\hb(s)$ vectors. Taking these to be row vectors, they satisfy the matrix \textsc{ode}
\begin{equation}\label{eq:FrenetSerret}
	\frac{d}{ds} \left(\begin{array}{c}
		\hatt \\ \hn \\ \hb
		\end{array}\right)
		=  \left(\begin{array}{ccc}
	        	0 & \kappa(s) & 0 \\
	        	-\kappa(s) & 0 & \tau(s) \\
	        	0 & -\tau(s) & 0
	        	\end{array}\right)\left(\begin{array}{c}
	        	\hatt \\ \hn \\ \hb 
	    \end{array}\right).
\end{equation}
Here $\kappa(s)$ and $\tau(s)$ are the curvature and torsion of $r_0$, respectively. 
Under the assumption that $\kappa(s)$ is non-vanishing, they are given explicitly by
\[	
	\kappa = |r_0^{\prime\prime}|,\qquad \tau 
	  = \frac{(r_0^\prime \times r_0^{\prime\prime})\cdot r_0^{\prime\prime\prime}}{\kappa^2}.
\]
		
The Frenet-Serret frame defines a local embedding
\begin{equation}\label{eq:EuclidMapping}
	\pi_{FS} : D^2\times \bS^1 \to \R^3,\qquad (x,y,s) \mapsto r_0(s)+ x \hn(s) + y \hb(s).
\end{equation}
In other words, $\pi_{FS}$ is an embedding of the trivial disk bundle $ D^2\times \bS^1$
into a tubular neighborhood of $r_0(s)$ in $\R^3$. In the plasma physics literature, these coordinates are often referred to as Mercier coordinates \cite{Jorge20a}.

While the Frenet-Serret frame constructs coordinates in terms of the geometrically significant
quantities $\kappa$ and $\tau$, in some practical cases this frame does not exist.
This occurs, for example, if $r_0$ is not $C^3$, or, more crucially, if $r_0$ has any
inflection points or straight segments, i.e., points with $\kappa = 0$. 
		
There are other choices for an orthonormal 
frame based on the curve $r_0(s)$. Such a frame can also be obtained that has a 
diagonal induced metric (in contrast to the Frenet-Serret case). Such a frame with is called \emph{rotation minimizing} \cite{Bishop75}. 
	 
Note that Mercier \cite{Mercier64} established a rotation minimizing frame starting with the Frenet-Serret frame. However, the former can be constructed independently of the existence of the latter. 
There are at least two ways to do this. The first is to use a 
three-dimensional version of Fermi-Walker transport \cite{dandoloffParallelTransportSpace1989}.
The second is Bishop's \emph{relatively parallel adapted frame} \cite{Bishop75}. 
A \emph{relatively parallel} vector field, $v(s)$, is one that is \textit{normal} to the curve, that is $v(s) \cdot \hatt(s) = 0$, but such that $v^\prime(s)$ is \textit{parallel} to $\hatt(s)$. Provided that $r_0$ is at least $C^2$, there exists a unique relatively parallel vector field $v(s)$ such that $v(0) = v_0$ for every initial normal vector $v_0$, see Thm. 1 of \citeInline{Bishop75}. These vector fields can be constructed from a Frenet-Serret frame; however, they may also be constructed from any orthonormal frame. Crucially, this means that the curvature need not be nonzero, and the curve need not be $C^3$.
	
As a consequence, for each initially orthonormal basis $(\hatt(0), \hn_1(0), \hn_2(0))$ we can compute a unique, relatively parallel adapted orthonormal frame $(\hatt,\hn_1,\hn_2)$ along the curve $r_0$. Being relatively parallel, $\hn_1^\prime = -\kappa_1 \hatt,\, \hn_2^\prime = -\kappa_2 \hatt$ for some functions $\kappa_1,\kappa_2$. Thus
\begin{equation}\label{eq:BishopFrame}
	\left(\begin{array}{c}
		\hatt^\prime \\ \hn_1^\prime \\ \hn_2^\prime
		\end{array}\right)
	= \left(\begin{array}{ccc}
			0 & \kappa_1(s) & \kappa_2(s) \\
			-\kappa_1(s) & 0 & 0 \\
			-\kappa_2(s) & 0 & 0
	\end{array}\right)\left(\begin{array}{c}
							\hatt \\ \hn_1 \\ \hn_2
	\end{array}\right).
\end{equation}
The functions $(\kappa_1,\kappa_2)$ define the so-called \emph{normal development} of the curve $r_0$. If the Frenet-Serret frame exists, then
\begin{equation}\label{eq:kappaRelationship}
	\kappa_1(s) = \kappa(s) \cos(\gamma(s) - \gamma^* ),\qquad 
	\kappa_2(s) = \kappa(s) \sin(\gamma(s) - \gamma^*),
\end{equation}
where $\gamma^*$ is the angle between $\hn_1(0)$ and $\hn(0)$, and $\gamma(s)$ the \emph{integral torsion}
\begin{equation} \label{eq:gamma}
	\gamma(s) = \int_0^s \tau(s) ds.
\end{equation}
	
As we mentioned above, rotation minimizing coordinates have another prominent advantage: 
the induced metric is diagonal, unlike that of the Frenet-Serret frame. 
Indeed, if $g_e$ is the Euclidean metric on $\R^3$ and $\pi_{FS}$ is the embedding \eqref{eq:EuclidMapping}, then the induced metric $g = \pi_{FS}^*g_e$ for the Frenet-Serret frame is \cite{Mercier64}
\begin{equation}\label{eq:FrenetMetric}
	\begin{aligned}
		g &= \left( h_s^2 + \tau^2(x^2 + y^2)  \right) ds^2 +2\tau (x ds dy -y ds dx) + dx^2 + dy^2, \\
		h_s &\equiv 1-\kappa x.
	\end{aligned}
\end{equation}
This metric has non-diagonal terms due to the torsion. By contrast, for the rotation minimizing frame and the embedding, $\pi_{B}:D^2\times\R \to \R^3$  
\begin{equation}\label{pibDefine}
	\pi_{B}(\tx,\ty,s) = r_0(s) + \tx \hn_1 +\ty \hn_2,
\end{equation} 
the induced metric $\tilde{g} = \pi_B^*g_e$ becomes
\begin{equation} \label{eq:RotMinMetric}
	\begin{aligned}
		\tilde{g} &=  \tilde{h}_s^2  ds^2 + d\tx^2 + d\ty^2,\qquad \\
		h_s &\equiv 1-\kappa_1 \tx -\kappa_2 \ty,
	\end{aligned}
\end{equation}
which is now diagonal.	

Rotation minimizing coordinates are not without their drawbacks. The frame is not necessarily periodic in $s$, even for a periodic $r_0$. Hence, it must be ensured that functions, forms or vectors defined on $\tilde{M} := D^2 \times\R$ are periodic when pulled back to $M$. If $\gamma_0$ is defined as the positively oriented angle between $\hn_1(0)$ and $\hn_1(\Tee) $, and $\gamma(s)$ is 
\textit{any} function satisfying
\begin{equation}\label{eq:gammaCondition}
	\gamma(\Tee + s) - \gamma(s) = \gamma_0,
\end{equation}
then this periodicity condition is equivalent to ensuring any object is well-defined under the push-forward by $\pi_S\circ R_\gamma$ where  $\pi_S (x,y,s) = (x,y,s \mod \Tee)$ is the natural projection from the cover 
$\tilde{M}$ to $ D^2\times \bS^1$, and $R_\gamma$ is a positive rotation in the plane normal to 
$\hat{t}_0(s)$ by $\gamma(s)$ for each $s$. Note that if $\gamma$ is specifically the integral 
torsion \eqref{eq:gamma} then we will push-forward to the Frenet-Serret frame; however, 
$\gamma$ can be any function satisfying \eqref{eq:gammaCondition} and we will push-forward 
to some orthonormal periodic frame of $r_0$.
	
One other drawback of the rotation minimizing coordinates is that, unlike $\kappa$ and $\tau$, the quantities $\kappa_1$ and $\kappa_2$ do not uniquely define the curve $r_0$. However, it is clear from \eqref{eq:kappaRelationship} that the normal development of a Frenet-Serret curve is unique up to rotation (essentially up to the constant $\gamma_0$ in \eqref{eq:gammaCondition}).
	
Another trick that we will find useful is to think of $D^2\subset \C$ and use the complex coordinate $z= x+ i y$, so that the metric $g$ \eqref{eq:FrenetMetric} becomes
\begin{align*}
		g &= (h_s^2 + \tau^2 z\bz ) ds^2 + i\tau(zd\bz ds - \bz dz ds) + dzd\bz , \\
		h_s &= 1-\tfrac12 \kappa (z+\bz)
\end{align*} 
Setting the initial phase $\gamma^*$ \eqref{eq:kappaRelationship} to zero, the rotation minimizing coordinates $(\tx,\ty)$ then become
\begin{equation}\label{eq:uDefine}
	u:= \tx+i\ty =e^{i\gamma}z,
\end{equation}
so that $\tilde{g}$ \eqref{eq:RotMinMetric} is now
\begin{equation}\label{eq:metrics}
	\begin{aligned}
		\tilde{g} &= h_s^2ds^2 + du d\bu , \\
		h_s &= 1-\tfrac12(\bar{\kappa}_u u + \kappa_u\bu),\qquad \kappa_u = \kappa_1 + i \kappa_2.
	\end{aligned}
\end{equation}
Note, that even though we use this complex notation, all physical functions will be taken to be real-valued.

Under the transformation to complex coordinates 
$(\tx,\ty)\mapsto(\tx+ i \ty,\tx-i \ty) = (u,\bu)$ on $T\tilde{M}$, the basis vectors $\partial_\tx,\partial_{\ty}$ of the tangent bundle and $d\tx,d\ty$  of the cotangent bundle
push forward to
\begin{equation}\label{eq:basisVectors}
	\begin{split}
		\partial_\tx = \partial_u+\partial_\bu,&\qquad \partial_\ty = i\partial_u - i\partial_\bu,\\
		d\tx = \tfrac12(du +  d\bu), &\qquad d\ty =\tfrac{1}{2i}(du - d\bu).		
	\end{split}
\end{equation}
It follows that an arbitrary vector field  becomes
\[
	B^s \partial_s+B^\tx\partial_\tx+B^\ty\partial_\ty =  
	B^s\partial_s + B^u\partial_u+\bar{B}^u\partial_\bu ,
\]
where $B^u = B^\tx+i B^\ty$. Similarly, an arbitrary one-form becomes
\begin{equation}\label{eq:formBasis}
	a = a_s ds + a_\tx d\tx + a_\ty d\ty = a_s ds + a_u du + \bar{a}_u d\bu ,
\end{equation}
with $a_u=\tfrac12(a_\tx - ia_\ty)$.
For the case of the vector potential,  $\alpha$, \eqref{eq:alpha} gives the covariant representation 
\begin{equation}\label{eq:covariant}
	\star d \alpha = B^\flat  = B_s ds + B_u du + \bar{B}_u d\bu.
\end{equation}
Note that for the metric \eqref{eq:metrics}, these components are related to the contravariant ones by 
\[
 	B_s = h_s^{2} B^s,\quad B_u = \tfrac12 \bar{B}^u, \quad \bar{B}_u = \tfrac12 B^u.
\]

\section{Near-Axis Hamiltonians, Floquet theory, and Normal forms}\label{sec:FloquetAndNormalForm}
	In this section we establish the Hamiltonian nature of magnetic fields near an axis, opening the study of magnetic fields to the tools of Hamiltonian mechanics. We then describe two such useful tools: Floquet theory and normal form theory. Both of these are useful in finding simple coordinates in the neighborhood of a magnetic axis, and we will use them in \cref{sec:ApplicationsToMagneticFields} to construct the ``simplest'' coordinates near a magnetic axis.
	
\subsection{Hamiltonian near a magnetic axis}
As is well known, the dynamics of the field lines in a neighborhood of $r_0$ can be described by a non-autonomous Hamiltonian system, see Ch. 9 of \citeInline{Hazeltine03}.

\begin{thm}\label{thm:HamiltonianFormulation}
	There is a tubular neighborhood $U\cong  D^2\times \bS^1$ of $r_0$ with coordinates $(x,y,s)\in U$ such that the closed orbit becomes $r_0(s) = (0,0,s)$ and there is a Hamiltonian $H: U \to \R$ such that
\begin{align*}
	\alpha &= y dx - H(x,y,s) ds\\
	\beta &= dy\wedge dx - dH \wedge ds.
\end{align*}
	That is, the one-form $\alpha$ is the \emph{Liouville one-form} of a \emph{non-autonomous Hamiltonian function $H$}. Moreover at the magnetic axis, $d_\perp H|_{r_0} = 0$.
\end{thm}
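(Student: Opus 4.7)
The plan is to combine a parametric Darboux-type argument on the disk slices with the presymplectic structure of $\beta=\iota_B\Omega$ established in \cref{lem:FieldConditions}. I would begin with Bishop's rotation-minimizing coordinates $(\tx,\ty,s)$ from \cref{sec:FramingAxis}, pushed forward by $\pi_S\circ R_\gamma$ so they are $\Tee$-periodic in $s$ and satisfy $r_0(s)=\{(0,0,s)\}$. Because $B$ is nonvanishing and tangent to $r_0$, after shrinking the tubular neighborhood $B$ is transverse to every disk slice $\Sigma_s=\{s=\text{const}\}$. This transversality together with $d\beta=0$ implies that $\omega_s:=\beta|_{\Sigma_s}$ is a smooth, $\Tee$-periodic family of symplectic area forms on $D^2$, each nonzero at the origin.

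The first substantive step is to Darboux-normalize this $s$-family of area forms. I would apply Moser's trick with the parameter $s$: connect $\omega_s$ to the standard form $\omega_\text{st}=dy\wedge dx$ by the path $\omega_t=(1-t)\omega_\text{st}+t\omega_s$, and solve $\iota_{X_t}\omega_t=-\lambda_s$ where $\lambda_s$ is a primitive of $\omega_s-\omega_\text{st}$ on the contractible disk, chosen to vanish at the origin. The time-$1$ flow $\phi_s:D^2\to D^2$ then fixes $0$, depends smoothly and periodically on $s$, and realizes $\phi_s^*\omega_s=dy\wedge dx$. Pulling back via $(x,y,s)\mapsto(\phi_s(x,y),s)$ produces new periodic coordinates on $U$ in which $r_0$ is still $\{x=y=0\}$ and $\beta|_{\Sigma_s}=dy\wedge dx$ for every $s$. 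The main obstacle is exactly this step: the Moser construction has to be carried out simultaneously smoothly, $\Tee$-periodically in $s$, and in a way that pins the origin, all of which is standard but deserves care since the disk bundle $D^2\times\bS^1$ need not be symplectically trivial a priori.

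Once slice-normalization is achieved, $\beta-dy\wedge dx$ is a closed 2-form on $U$ whose pullback to every $\Sigma_s$ vanishes, hence it must have the form $\eta\wedge ds$ for a unique smooth 1-form $\eta=\eta_x\,dx+\eta_y\,dy$ with no $ds$ component. Imposing $d\beta=0$ reduces to $d_\perp\eta\wedge ds=0$, so $\eta$ is closed in each slice; since the disk is simply connected, there exists a smooth, $\Tee$-periodic function $H:U\to\R$ with $\eta|_{\Sigma_s}=-d_\perp H(\cdot,s)$. This yields $\beta=dy\wedge dx-dH\wedge ds$.

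To recover the Liouville form, observe that $d(y\,dx-H\,ds)=\beta$ by direct computation, so $y\,dx-H\,ds$ is one primitive. Any other primitive differs by a closed 1-form on $D^2\times\bS^1$, and since $H^1(D^2\times\bS^1)\cong\R$ is generated by $ds$, that difference has the form $c\,ds+dg$. I would absorb $c$ into $H$ (replacing $H$ by $H-c$) and use the gauge freedom of the vector potential to discard $dg$, giving $\alpha=y\,dx-H\,ds$ as claimed. For the axis condition, the kernel of $\beta$ is spanned by $\partial_s+(\partial_y H)\partial_x-(\partial_x H)\partial_y$, and this is therefore proportional to $B$; the hypothesis that $r_0=\{x=y=0\}$ is a field line forces its perpendicular components to vanish along $r_0$, giving $\partial_x H|_{r_0}=\partial_y H|_{r_0}=0$, i.e., $d_\perp H|_{r_0}=0$.
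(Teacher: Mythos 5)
Your proof is correct, but it takes a genuinely different route from the paper's. The paper works directly with the potential: it gauge-fixes $\alpha = \alpha_{\tx}\,d\tx + \alpha_s\,ds$ (killing the $d\ty$ component), observes that $\partial_\ty\alpha_\tx|_{r_0} = -\rho B_0 \neq 0$, and then simply declares $(x,y) = (\tx,\alpha_\tx)$ to be the new coordinates via the inverse function theorem, with $H = -\alpha_s$. You instead work at the level of the two-form $\beta$: a parametric Moser--Darboux normalization of the slice restrictions $\beta|_{\Sigma_s}$ to the standard area form, followed by extracting $H$ from the leftover closed form $\eta\wedge ds$ using simple connectivity of the disk. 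Both arguments are sound. The paper's version is more elementary (one gauge choice plus the inverse function theorem, no ODE flows) and more directly useful downstream: the identification $y = \alpha_\tx$ is precisely what reappears in \cref{lem:guageChoice} and the canonical transformation \eqref{eq:canonicalTranform}, where the explicit series for $\alpha$ is converted into canonical coordinates. Your version is more intrinsic --- it makes clear that the normal form depends only on the presymplectic structure $\beta$ and not on any choice of primitive, and it cleanly isolates where periodicity and the pinning of the origin enter (in the Moser flow) --- but it is less explicit, since the normalizing diffeomorphism is only defined as the time-one map of a nonautonomous vector field. Two small points to watch in your write-up: the convex path $\omega_t=(1-t)\omega_{\mathrm{st}}+t\omega_s$ is nondegenerate only if $\omega_{\mathrm{st}}$ is chosen with the orientation matching $\omega_s$ (which is consistent along the axis because $\rho B_0$ is nonvanishing), and your final kernel argument implicitly uses that $\beta$ has maximal rank so that its kernel is one-dimensional and hence spanned by $B$; both are true here but worth stating.
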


\begin{proof}

Take some orthonormal frame at each point on $r_0$ to define coordinates in a tubular neighborhood $(\tx,\ty,s) \in U \cong D^2\times \bS^1$ of $r_0$ such that $r_0(s) = (0,0,s)$. In such a neighborhood, the fact that $\beta$ is closed implies that it is exact, that is, there exists $\alpha$ such that $\beta = d\alpha$. Using the gauge freedom of $\alpha$ we can assume
\[ 
	\alpha = \alpha_{\tx}(\tx,\ty,s) d\tx + \alpha_s(\tx,\ty,s) ds,\qquad 
	\partial_s \alpha_{\tx}(0,0,s) = 0,
\]
so that 
\[
	\beta = d\alpha = \partial_{\ty} \alpha_{\tx} d\ty\wedge d\tx + 
	(\partial_s\alpha_{\tx} - \partial_{\tx} \alpha_s)ds \wedge d\tx + 
	\partial_{\ty} \alpha_s d\ty\wedge ds .
\]
	
The magnetic field is tangent to the axis, so  $B|_{r_0} = B_0(s)\partial_s$, where
$B_0(s) \neq 0$ by assumption. Moreover, the volume form in $(\tx,\ty,s)$ has the form $\Omega = \rho(\tx,\ty,s) d\tx\wedge d \ty\wedge ds$ for some nonzero density $\rho$. Therefore, since $\beta = \iota_B\Omega$,
we know $\beta|_{r_0} = -\rho(0,0,s) B_0(s) d\ty\wedge d\tx$ and it follows that $\partial_\ty \alpha_\tx|_{r_0} = -\rho(0,0,s) B_0 \neq 0$, $-\partial_\tx \alpha_s(0,0,s) = 0$ and $-\partial_\ty \alpha_s(0,0,s) = 0$.

Choose new coordinates $(x,y,s) = (\tx, \alpha_{\tx}, s)$. This is a diffeomorphism, locally in $(\tx,\ty)$, for all $s$ by the inverse function theorem. In these new coordinates define $H = -\alpha_s$, and then $\alpha = y dx - H ds$ and $\beta = dy\wedge dx - dH\wedge ds$ as desired.

Note that $d_\perp H|_{r_0} = -\partial_x \alpha_s(0,0,s) dx -\partial_y \alpha_s(0,0,s) dy = 0$ by the assumed form of $\beta$ on $r_0$.
\end{proof}

In the language of vector calculus, \cref{thm:HamiltonianFormulation} is equivalent to showing that there are coordinates such that the contravariant representation of $B$ is 
\[
	B = \nabla y\times \nabla x - \nabla H(x,y,s)\times \nabla s.
\]

\subsection{Normal Forms: Set-up}\label{sec:normalFormsSetUp}

Birkhoff's normal form theory seeks a choice of canonical coordinates near a periodic orbit, or fixed point, for which the Hamiltonian takes its ``simplest'' form. The definition of ``simplest'' is perhaps a matter of taste; for the Birkhoff normal form, the goal is to have as few terms as possible in the series expansion of $H$. The normal form will be the result of an iterative construction of a new coordinate system. 

A review of normal form theory is given in \cref{sec:NormalFormAppendix}. Here, we will outline the core details for the normal form near a periodic orbit or magnetic axis, $r_0$, such that $d_\perp H|_{r_0} = 0$, where $d_\perp$ is derivative perpendicular to $r_0$.
It is convenient to introduce the angle
\[
	\phi = 2\pi \frac{s}{\Tee} ,
\]
so that the axis can be thought of as a periodic orbit with period $2\pi$.

Assume that $H = H(x,y,\phi)$ is a non-autonomous Hamiltonian on $D^2\times \bS^1$ with canonical variables $(x,y)$ and such that $x=y=0$ corresponds to an isolated, $2\pi$-periodic orbit. We begin by expanding $H$ in a Taylor expansion in $x,y$
\begin{equation}\label{eq:HExpansion}
	H \sim H_0 + H_1 + \dots. 
\end{equation}
Here we denote the \textit{lowest degree} terms by $H_0$, i.e., we assume that there is a $k \in \N$ such that $H_0$ is a degree $k$ polynomial in $x,y$. Similarly, $H_i$ denotes a degree $k+i$ polynomial in $(x,y)$. All of these coefficients are $2\pi$-periodic in $\phi$. 
	
Generally, $H$ begins with quadratic terms so that $k=2$. If it does not, then the orbit $r_0$ is said to be \emph{degenerate}. Such cases can still be treated by normal form theory, however, it is much more difficult to deduce the final normal form of $H$ (see \cref{sec:NormalFormAppendix} for further details). Henceforth, assume that $k=2$.

\subsection{Floquet Theory}	
We will first ignore the higher order terms and treat the dynamics of the quadratic Hamiltonian $H_0(x,y,\phi)$ using Floquet theory. The resulting linear system is
\begin{equation}\label{eq:HamSys}
    \displaystyle
    \frac{d}{d\phi} \left(\begin{array}{c}
    				x\\ y
    				\end{array}\right)
    = J \nabla H_0 
    = A(\phi)\left(\begin{array}{c}
    			x \\ y
    		 \end{array}\right),
    \quad J =  \begin{pmatrix}  0 & 1 \\ -1 & 0 \end{pmatrix},
\end{equation}
where the matrix $A(\phi)$ is a $2\pi$-periodic Hamiltonian matrix, i.e., $JA = (JA)^T$.

Since this is a linear, time-periodic system, the core result of Floquet theory \cite{Meiss17a} applies:

\begin{thm}[Floquet-Lyapunov]\label{thm:floquet}
The fundamental matrix solution $X(\phi)$ of
\begin{equation} \label{eq:FloquetDiffEqn}
	\dot{X} = A(\phi)X,\quad X(0) = I,
\end{equation}
is of the form 
\[ 
	X(\phi) = P(\phi) e^{C \phi},
\]
where the matrix $P(\phi)$ is symplectic and $2\pi$-periodic and $C$ is a constant Hamiltonian matrix.
Moreover, $P(\phi)$ and $C$ can be assumed to be real by letting $P(\phi)$ be $4\pi$-periodic if necessary.
\end{thm}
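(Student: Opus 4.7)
My plan is to prove the Floquet-Lyapunov theorem in three main stages: first establish that the fundamental matrix is symplectic for all $\phi$; second, extract a constant generator from the monodromy matrix via a logarithm; third, handle the reality and Hamiltonian-preservation issues that may force us to double the period.

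\textbf{Step 1: Symplecticity of $X(\phi)$.} I will first show that since $A(\phi)$ is Hamiltonian, the fundamental matrix $X(\phi)$ is symplectic for every $\phi$. The idea is to differentiate $\Phi(\phi) := X(\phi)^T J X(\phi)$:
\begin{equation*}
\dot{\Phi} = \dot{X}^T J X + X^T J \dot{X} = X^T(A^T J + JA)X = 0,
\end{equation*}
where the last equality uses the Hamiltonian condition $JA = (JA)^T$, equivalently $A^TJ + JA = 0$. Since $\Phi(0) = J$, we get $X(\phi)^T J X(\phi) = J$ for all $\phi$. In particular the monodromy matrix $M := X(2\pi)$ is symplectic.

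\textbf{Step 2: Construction of $P$ and $C$.} Next I would \emph{assume} for the moment the existence of a Hamiltonian matrix $C$ such that $e^{2\pi C} = M$, and define
\begin{equation*}
P(\phi) := X(\phi)\, e^{-C\phi}.
\end{equation*}
Since both factors are symplectic (the exponential of a Hamiltonian matrix is symplectic, which follows from differentiating the defining identity along the one-parameter subgroup), $P(\phi)$ is symplectic. Periodicity then follows by a direct computation: using $\dot X = AX$ periodicity of $A$ forces $X(\phi + 2\pi) = X(\phi) M$, so
\begin{equation*}
P(\phi+2\pi) = X(\phi)\, M\, e^{-C(\phi+2\pi)} = X(\phi)\, e^{2\pi C} e^{-C\phi} e^{-2\pi C} = X(\phi) e^{-C\phi} = P(\phi),
\end{equation*}
where we used $[M, e^{-2\pi C}] = 0$. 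This reduces everything to the existence of a real Hamiltonian logarithm of $M$.

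\textbf{Step 3: Existence of a Hamiltonian logarithm (the hard part).} This is where the main obstacle lies. Over $\mathbb{C}$ every invertible matrix has a logarithm and one can arrange that $\log M$ is Hamiltonian whenever $M$ is symplectic, because the symplectic Lie algebra equals the fixed-point set of the involution $X \mapsto -J X^T J^{-1}$ on $\mathfrak{gl}$ and this involution lifts compatibly through the holomorphic functional calculus. Concretely, I would use the spectral decomposition of $M$: symplectic eigenvalues come in reciprocal pairs $\{\lambda, \lambda^{-1}\}$, and on each invariant pair of Jordan blocks one can write down a Hamiltonian logarithm explicitly. Summing yields a possibly complex Hamiltonian $C$ with $e^{2\pi C}=M$, and this already proves the first assertion of the theorem.

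\textbf{Step 4: Reality via period doubling.} The remaining issue is that even though $M$ is real, a real Hamiltonian logarithm need not exist; the obstruction comes from negative real eigenvalues of $M$ whose Jordan structure is incompatible with a real branch of the log. The standard fix, which produces the $4\pi$-periodic real statement, is to replace $M$ by $M^2$: for any real invertible matrix, $M^2$ admits a real logarithm (each negative-eigenvalue Jordan block now appears paired), and one can check symplectically that this real logarithm can be chosen Hamiltonian. Setting $C := \tfrac{1}{4\pi}\log_{\mathbb R}(M^2)$ and $P(\phi) := X(\phi) e^{-C\phi}$, the same calculation as in Step 2, but with $4\pi$ in place of $2\pi$, shows $P$ is real, symplectic, and $4\pi$-periodic. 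This completes the proof.
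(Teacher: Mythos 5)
Your proposal is correct and follows the standard route: the paper itself does not prove this theorem, deferring instead to the classical literature (Thm.~3.4.2 of Meyer et al.\ and the reference [Meiss17a]), and your four steps --- symplecticity of $X(\phi)$ from $A^TJ+JA=0$, the factorization $P(\phi)=X(\phi)e^{-C\phi}$ with $X(\phi+2\pi)=X(\phi)M$, the existence of a (complex) Hamiltonian logarithm of the symplectic monodromy matrix via reciprocal eigenvalue pairs, and period-doubling to $M^2$ to obtain a real Hamiltonian logarithm --- are exactly the argument given in those references. The only place where care is needed is Step 3/4, the existence of the Hamiltonian logarithm, which you correctly flag as the technical heart but only sketch; for the $2\times 2$ case actually used in this paper it reduces to the elementary fact that Hamiltonian matrices are the trace-free ones and that a real trace-free logarithm of $M\in SL(2,\R)$ exists unless $M$ has negative eigenvalues, which is precisely the reflection-hyperbolic case requiring $4\pi$-periodicity.
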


As noted, one can take $P(\phi)$ to be a symplectic matrix whenever \eqref{eq:FloquetDiffEqn} is Hamiltonian (Thm.~3.4.2 of \citeInline{meyerIntroductionHamiltonianDynamical2009}), i.e., $P^TJP = J$. In this case, $C$ must be a Hamiltonian matrix.

The eigenvalues of $C$ 
are called the \textit{Floquet exponents}. Taking coordinates $w \in\R^2$ via $(x,y)^T = P(\phi) w$ transforms \eqref{eq:FloquetDiffEqn} to the autonomous system $\dot{w} = C w$. Consequently, in the new coordinates 
$H_0 = \tfrac12 w^T C w$ is autonomous. 
	
For a one and a half degree-of-freedom Hamiltonian system, there are two Floquet exponents, $\omega_1,\omega_2$,
which must satisfy  $\omega_1 = -\omega_2$. Thus they are either purely imaginary, purely real or both zero.

When the exponents are purely imaginary, say $\pm i\,\iotab_0$, with \textit{rotational transform}
$\iotab_0\in\R\setminus\{0\}$,
then the linear system \eqref{eq:FloquetDiffEqn} is stable. More precisely,
solutions to \eqref{eq:FloquetDiffEqn} lie on invariant tori with elliptical cross
sections on $\phi=const$ surfaces. 
It is always possible in the this case to take $P(\phi)$ to be $2\pi$-periodic.

In contrast, when the exponents are purely real, say $\pm \nu_0$ with \textit{expansivity}
$\nu_0 \in\R\setminus\{0\}$, equation \eqref{eq:FloquetDiffEqn} is hyperbolic and
the periodic orbit has invariant stable and
unstable manifolds. For so-called \emph{reflection hyperbolic} orbits, the matrix $P(\phi)$ must be taken $4\pi$-periodic. Geometrically, these orbits have stable manifolds that make a $(2j+1)\pi$ rotation as $\phi$ goes from $0$ to $2\pi$, for some $j\in\Z$. In contrast,
$P(\phi)$ can be taken $2\pi$ periodic for \emph{direct hyperbolic} orbits, which have stable manifolds that make a $2j\pi$ rotation. 
These invariant manifolds serve as separatrices for $H_0$. 

The full, nonlinear system  still have bounded solutions when the axis is hyperbolic;
however, for this to be the case there must be another magnetic axis that is elliptic. 
For example, in \cref{fig:magAxisExam} the three points on the separatrix are hyperbolic orbits, while the remaining four are elliptic orbits, and the overall system still has bounded orbits.

Finally, the Floquet exponents may vanish, and then the axis is \emph{degenerate}. More generally an elliptic case could be said to be degenerate, or \emph{resonant}, when $\iotab \in \Q$. Even though a resonant axis is linearly elliptic, higher order terms may destroy the tori of the quadratic part. 

\subsection{Normal Forms: Higher order}
	
	Returning to the normal form procedure, we will assume that the Floquet transformation has been made so that $H_0$ does not depend on $\phi$ and that the Floquet exponents are nonzero, so the axis is linearly elliptic or hyperbolic.
	
	We seek coordinates in a neighborhood of $r_0$ that transform $H$ to its ``simplest'' form, so that the core aspects of the dynamics can easily be understood. The most concise way to state the normal form theorem is to use the Poisson bracket; if $f,g\in C^\infty(M)$ and we have canonical coordinates $x,y$ normal to $r_0$, then the \emph{Poisson bracket} is defined as
\begin{equation}\label{eq:PoissonBracket}
	\{f,g\} =   (\partial_x f)(\partial_y g)-(\partial_y f)(\partial_x g) .
\end{equation}
To simplify the calculations for the elliptic case, we will use the complex conjugate variables $(u,\bu)$, with $u = x+ iy$. In these coordinates the Poisson bracket becomes
\begin{equation}\label{eq:PoissonBracketComplex}
	\{f,g\} = 2i [(\partial_\bu f)(\partial_u g) - (\partial_u f) (\partial_\bu g) ]  .
\end{equation}
Note that even with the complex coordinates all physical functions are real-valued.
	
The following theorem gives the desired normal form for $H$.
	
\begin{thm}\label{thm:NormalForm}
Let $H$ be a Hamiltonian system containing a linearly elliptic or hyperbolic periodic orbit $r_0$ of period $2\pi$. There exists a formal, canonical, $2\pi$-periodic (possibly $4\pi$-periodic), near-identity, change of variables $w = \Phi(u,\phi)$ that transforms the Hamiltonian \eqref{eq:HExpansion} to
\[
	\tilde{H} \sim \sum \tilde{H}_j(w,\phi) ,
\]
such that
\begin{equation}\label{eq:cohomology}
	\{\tilde{H}_j,H_0\} + \partial_\phi \tilde{H}_j = 0, 
\end{equation}
for all $j \geq 0$.
\end{thm}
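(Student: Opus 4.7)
The plan is to apply the Lie series method iteratively, following Deprit and Hori, adapted to the non-autonomous one-and-a-half-degree-of-freedom setting as advocated in \citeInline{dragtLieSeriesInvariant1976}. After the Floquet step $H_0$ is autonomous, and the Taylor expansion at the axis reads $H \sim H_0 + H_1 + H_2 + \dots$, each $H_j$ being a homogeneous polynomial of degree $j+2$ in $(u,\bu)$ with coefficients $2\pi$-periodic (or $4\pi$-periodic, in the reflection case of \cref{thm:floquet}) in $\phi$. I would build $\Phi$ as the composition of time-$1$ flows of extended Hamiltonians $W_1, W_2, \dots$, with $W_j$ a polynomial of the same type as $H_j$.

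At step $j$, the Lie expansion of the transformed Hamiltonian, after the previous $j-1$ steps have normalized $H_1,\dots,H_{j-1}$, yields the \emph{homological equation}
\[
    \mathcal{D} W_j := \partial_\phi W_j + \{W_j, H_0\} = H_j^\ast - \tilde{H}_j ,
\]
where $H_j^\ast$ is a known polynomial of degree $j+2$ depending only on $H_0,\dots,H_j$ and $W_1,\dots,W_{j-1}$, and $\tilde{H}_j$ must lie in $\Ker \mathcal{D}$ to achieve \eqref{eq:cohomology}. The step thus reduces to (i) choosing $\tilde{H}_j$ so that $H_j^\ast - \tilde{H}_j \in \Ima \mathcal{D}$, and (ii) solving for $W_j$ modulo $\Ker \mathcal{D}$.

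The crux is the analysis of $\mathcal{D}$ on the space $\mathcal{P}_{j+2}$ of degree-$(j+2)$ polynomials with periodic coefficients. A direct calculation using \eqref{eq:PoissonBracketComplex} shows that $\mathcal{D}$ is diagonalized by the monomial--Fourier basis. In the elliptic case, $H_0 = \tfrac12 \iotab_0 u \bu$ gives
\[
    \mathcal{D}\!\left(u^a \bu^b \me^{im\phi}\right) = i\bigl(m - \iotab_0(a-b)\bigr) u^a \bu^b \me^{im\phi} ,
\]
whence $\mathcal{P}_{j+2} = \Ker \mathcal{D} \oplus \Ima \mathcal{D}$: the kernel is spanned by monomials with $m = \iotab_0(a-b)$, and one sets $\tilde{H}_j$ equal to the projection of $H_j^\ast$ onto this kernel, then solves for $W_j$ by dividing each remaining Fourier-monomial component of $H_j^\ast - \tilde{H}_j$ by its nonzero eigenvalue. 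The hyperbolic case is analogous, with $H_0 = \nu_0 xy$ and real basis $x^a y^b \me^{im\phi}$: the eigenvalues $im + \nu_0(a-b)$ vanish only when $m=0$ and $a=b$, collapsing the normal form to an autonomous function of $xy$.

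The construction is manifestly formal, so no Diophantine condition on $\iotab_0$ is required: the resonant elliptic case $\iotab_0 = p/q \in \Q$ is handled identically, the kernel merely enlarging to include the resonant monomials $u^a \bu^b \me^{ikp\phi}$ with $a - b = kq$, which recovers the Gustavson form in the angle $q\theta + p\phi$. The main technical obstacle I anticipate is bookkeeping for the $4\pi$-periodic reflection case: there the Fourier basis must be refined to $\me^{im\phi/2}$ with $m \in \Z$, and one must verify that each $W_j$ descends, via the natural projection $\pi_S$, to a well-defined $4\pi$-periodic diffeomorphism $\Phi$ on the double cover. The remaining points --- closure of $\mathcal{P}_{j+2}$ under $\{-, H_0\}$ and $\partial_\phi$, canonicity and the near-identity character of the composed time-$1$ flow, and the inductive claim that $H_j^\ast \in \mathcal{P}_{j+2}$ --- are standard for formal Lie series and may be cited from \citeInline{meyerIntroductionHamiltonianDynamical2009}.
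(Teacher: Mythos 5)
Your proposal is correct and follows essentially the same route as the paper's own treatment in its normal-form appendix: a Lie-series iteration with the homological equation for the operator $\partial_\phi + \{\cdot,H_0\}$, solved by splitting the space of degree-$(j+2)$ polynomials with periodic coefficients into the kernel and image of that operator and retaining the kernel projection as $\tilde{H}_j$. Your operator $\mathcal{D}$ is exactly the paper's extended-phase-space bracket $\tilde{L}_{H_0}$, and your explicit diagonalization on the monomial--Fourier basis is just the concrete form of the paper's appeal to the diagonalizability of $L_{H_0}$.
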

\noindent
This theorem is due to Birkhoff \cite{Birkhoff27}, and most books on Hamiltonian mechanics contain a proof. A particularly thorough account is given in \citeInline{meyerIntroductionHamiltonianDynamical2009}. The proof is constructive, giving an iterative procedure to compute the normal form at each order $j$. Some of the details of the computation are given in \cref{sec:NormalFormAppendix}. 
	
The terms in the normal form $\tilde{H}$ depend on whether the axis is hyperbolic or elliptic and, in the latter case, resonant or not.
\begin{corollary}\label{cor:NormalForm}
Let $\tilde{H}$ be a non-autonomous Hamiltonian system that contains a periodic orbit $r_0$.
Then there are local coordinates $(x,y,\phi)$ such that:
\begin{enumerate}[(i)]
\item if $r_0$ is linearly elliptic with Floquet exponents $\pm i \iotab_0$ then,
if $\iotab_0 \notin \Q$ the formal normal form becomes 
\[ 
	\tilde{H}(x,y,\phi) \sim F(x^2+y^2) ,
\]
for some function $F: \R \to \R$; by contrast, in the resonant case, $\iotab_0 = p/q \in\Q$,
\begin{equation}\label{eq:ResonantNormal}
	\tilde{H} \sim \tfrac12 \tfrac{p}{q} \rho^2 + K(\rho, q\theta + p\phi),
\end{equation}
for some function $K$ where $(\theta, \rho)$ are defined by $x+iy = \rho e^{i \theta}$; and
\item if $r_0$ is linearly hyperbolic then the formal normal form becomes
\[ 
	\tilde{H}(x,y,\phi) \sim F(x y). 
\]
\end{enumerate}
\end{corollary}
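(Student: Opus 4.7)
The plan is to invoke \cref{thm:NormalForm} to push the problem into solving the cohomological equation \eqref{eq:cohomology} on a convenient Fourier--monomial basis. First I would fix the form of $H_0$ in each case. After the Floquet reduction of \cref{sec:FloquetAndNormalForm}, in the elliptic case $H_0$ may be taken to be $\tfrac12\iotab_0(x^2+y^2)$; I would switch to the complex conjugate coordinates $u = x+iy$ so that $H_0 = \tfrac12 \iotab_0 u\bar u$, using the bracket \eqref{eq:PoissonBracketComplex} and noting that $\{u,\bar u\}=-2i$. In the hyperbolic case an additional linear canonical change of variables brings $H_0$ to the characteristic form $\nu_0 xy$; this diagonalization of a real Hamiltonian matrix with real eigenvalues lies outside \cref{thm:floquet} proper and must be flagged as a standard real symplectic diagonalization of the $\pm\nu_0$ eigendirections of the Hamiltonian matrix in \eqref{eq:HamSys}.

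The key step is the observation that the operator $L := \{\,\cdot\,,H_0\}+\partial_\phi$ is diagonal in the Fourier--monomial basis:
\begin{equation*}
L\bigl(u^m \bar u^n e^{ik\phi}\bigr) = i\bigl[(n-m)\iotab_0+k\bigr]\,u^m \bar u^n e^{ik\phi},\qquad L\bigl(x^m y^n e^{ik\phi}\bigr) = \bigl[(m-n)\nu_0+ik\bigr]\,x^m y^n e^{ik\phi}.
\end{equation*}
These follow from the Leibniz rule applied to $\{u^m\bar u^n,u\bar u\}=2i(n-m)u^m\bar u^n$ in the elliptic case and to $\{x^m y^n,xy\}=(m-n)x^m y^n$ in the hyperbolic case. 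Since \cref{thm:NormalForm} places every normal-form coefficient $\tilde H_j$ in $\ker L$, and $2\pi$-periodicity in $\phi$ restricts $k$ to $\Z$, each of the three claims reduces to arithmetic. In the non-resonant elliptic case, $(n-m)\iotab_0+k=0$ with $\iotab_0\notin\Q$ and $k\in\Z$ forces $m=n$ and $k=0$, so each $\tilde H_j$ is a real polynomial in $u\bar u = x^2+y^2$, yielding $F(x^2+y^2)$. In the hyperbolic case, equating real and imaginary parts of $(m-n)\nu_0+ik=0$ forces $m=n$ and $k=0$, yielding $F(xy)$. For $\iotab_0=p/q$ with $\gcd(p,q)=1$ the condition becomes $(n-m)p+kq=0$; coprimality gives $m-n=jq$ and $k=jp$ for some $j\in\Z$. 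Passing to polar coordinates $u=\rho e^{i\theta}$ turns each surviving monomial into a constant multiple of $\rho^{m+n}e^{ij(q\theta+p\phi)}$, so the positive-degree normal-form terms assemble into a real series $K(\rho,q\theta+p\phi)$; adding $H_0=\tfrac12(p/q)\rho^2$ recovers \eqref{eq:ResonantNormal}.

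The main obstacle I anticipate is the bookkeeping in the resonant case: one must verify that the surviving terms combine cleanly into a \emph{real} function of the single resonant combination $q\theta+p\phi$, which requires pairing the conjugate monomials indexed by $(m,n,j)$ and $(n,m,-j)$ so that their complex exponentials assemble into real sines and cosines of $j(q\theta+p\phi)$, and tracking a sign depending on whether $H_0$ generates rotation at frequency $+\iotab_0$ or $-\iotab_0$. A secondary point to justify carefully is the preliminary canonical diagonalization $H_0\to\nu_0 xy$ in the hyperbolic case; once this is in place, the rest of the proof proceeds uniformly with the elliptic one.
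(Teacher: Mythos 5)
Your proposal is correct and follows essentially the route the paper intends: the corollary is left as a consequence of \cref{thm:NormalForm} together with the appendix's identification of the irremovable terms with $\ker(\{\cdot,H_0\}+\partial_\phi)$, and your diagonalization of that operator on the Fourier--monomial basis (with the coprimality argument $(n-m)p+kq=0\Rightarrow m-n=jq,\ k=jp$ in the resonant case) is exactly that kernel computation carried out explicitly. The two points you flag --- the real symplectic normalization $H_0\to\nu_0 xy$ in the hyperbolic case and the reality pairing of conjugate monomials --- are the same minor bookkeeping the paper absorbs into its Floquet lemma and the reality of $H$, so no gap remains.
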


A remarkable fact about normal forms for $1\tfrac12$ degree of freedom systems is that they are always formally integrable. This is most easily seen when the axis is non-resonant ($ \iota\notin\Q$) elliptic or hyperbolic so that normal form Hamiltonian  of \cref{cor:NormalForm} is independent of $\phi$. 
Thus, the Hamiltonian $\tilde{H}$ is a formal integral of the system. 

For the resonant elliptic case, the normal form \eqref{eq:ResonantNormal} depends only on the single
angle-like variable $q\theta + p\phi$. Thus one can do a time-dependent canonical transformation to a frame
that rotates with this angle to obtain a new Hamiltonian that is autonomous \cite{meyerIntroductionHamiltonianDynamical2009}.
In these new coordinates, the lowest order term $H_0$ is removed, and the Hamiltonian begins
with terms of degree $q$.
Thus the elliptic orbit becomes a \emph{degenerate} magnetic axis.
Nevertheless, since the system is now autonomous, it is formally integrable.  An example is shown in \cref{fig:res4plt} for $p/q = 1/4$. Note that the lowest order resonant terms in this case are quartic.
The (nonlinear) stability of the axis $x=y=0$ depends, in this case, on the size of the resonant terms \cite{meyerIntroductionHamiltonianDynamical2009}.

It is, however, important to note that the integrability of the normal form is misleading since the normal form expansion is generally only formal. Indeed, the power series for the coordinate transformation $\Phi$ of \cref{thm:NormalForm} typically does not converge, even in a neighborhood of the magnetic axis. Of course, if one knows that $\Phi$ is smooth or analytic then immediately one obtains the integrability of the system. There is a partial converse; if it is known that the system is integrable and the integral is nondegenerate (in particular non-resonant), then $\Phi$ must be smooth or analytic. The proof is recalled in \cite{burby2021integrability}.

\begin{figure}[ht]
	\centering
	\includegraphics[width=\linewidth]{./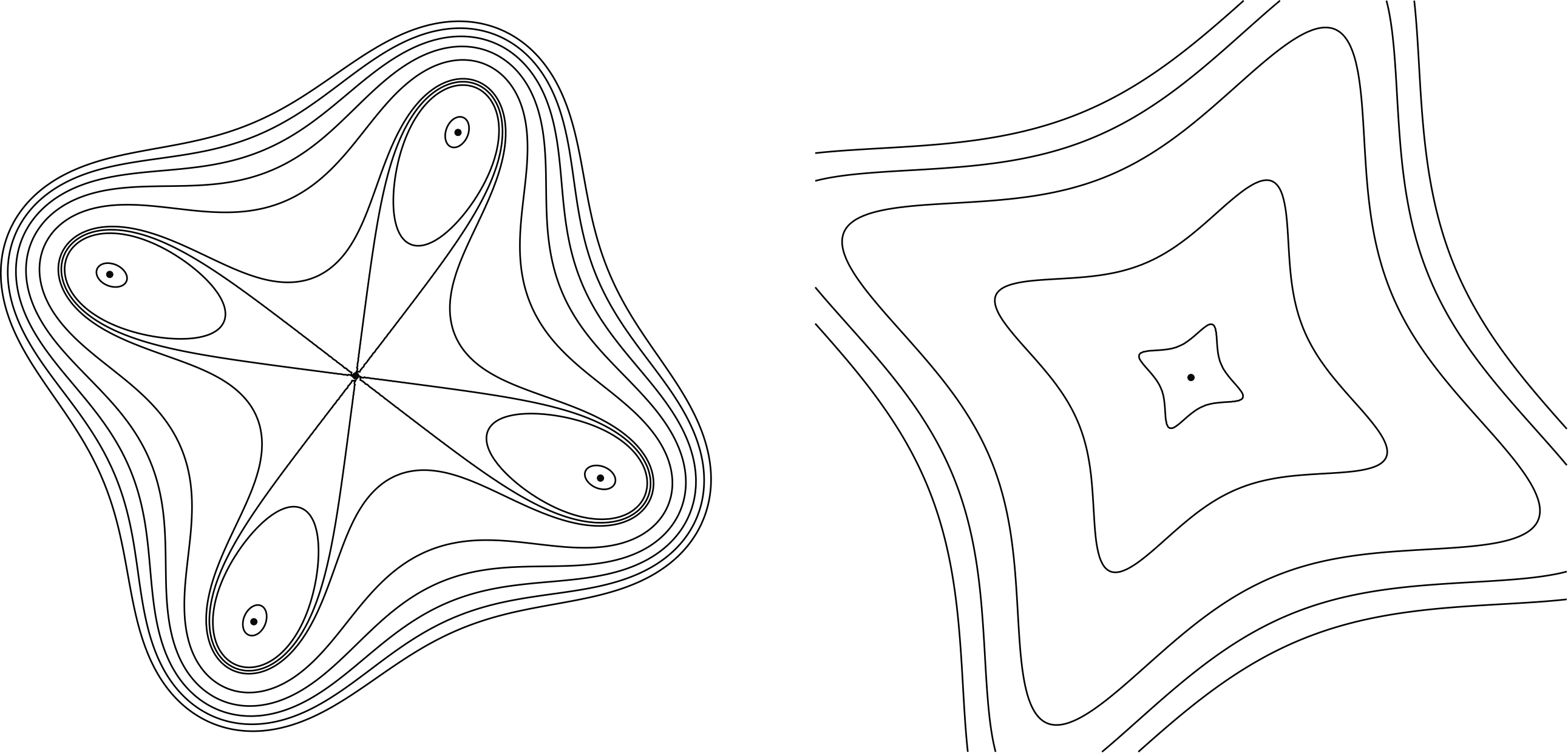}
	\caption{Two examples of an integral corresponding to a 1:4 resonance. On the left the central magnetic axis is unstable, whilst on the right it is stable.}
	\label{fig:res4plt}
\end{figure}
	
\subsection{Normal forms: Near Resonance}\label{sec:NearResonant}

A great benefit of understanding near-axis expansions through normal form theory is the ability to understand \emph{near} resonant phenomena. Suppose that the on-axis rotational transform $\iotab_0 = (p/q + \eps)$ for a  \emph{resonance detuning} $\eps$.
The key idea is to treat $\eps$ as formally small and to find the normal form of $H$ using the resonance $p/q$. In doing so, the normal form will be valid as $\eps$ crosses zero and may produce a better understanding of the phase space topology further away from the magnetic axis.

Of course, the rationals are dense in the reals, so there is always a $p/q$ arbitrarily close to $\iotab_0$. 
However, if $q$ is large, any resonant terms that appear will not enter the normal form until the $q^{th}$ degree terms in $x,y$. Although the following analysis will still work, it is only low-order resonances that are of primary concern near the axis.

Concretely, suppose the Hamiltonian is of the form
\[
	H = \tfrac12 (p/q + \eps) (x^2 + y^2) + \dots.
\]
When $\eps$ is formally small, it can be neglected in $H_0$ and
and the resonant normal form becomes \eqref{eq:ResonantNormal}. At this stage, we can add back the $\eps$ term
under the ordering assumption it is a small as the first resonant term, i.e., $\eps \rho^2 \sim \rho^q$.
The resulting Hamiltonian again depends only on the combination $q\theta + p\phi$,
and so it is integrable---with an invariant that can be obtained by a time-dependent transformation as before.

The topology and bifurcations of the phase portraits for different values of $q$ as $\eps$ passes through $0$ are well understood (see, for example, \citeInline{arnoldMathematicalAspectsClassical2006}). The usual consequence is a stable region about the axis followed by a $q$-island chain at a distance $\sim\sqrt{\eps}$ from the axis. However, the cases $q = 2,3,4$ are special, since the detuning term appears at an order comparable with the resonant normal form terms. A 1:3 near-resonant example was depicted in \cref{fig:magAxisExam}.

We will give an example use of this near resonance analysis in \cref{sec:Examples}.

\section{Application to Magnetic Axes}\label{sec:ApplicationsToMagneticFields}
In this section we apply the classical Floquet and normal form theory to magnetic axes.
	
\subsection{Formal Hamiltonians for magnetic axes}\label{sec:FormalHamiltonian}
	
Given the rotation minimizing coordinates of \cref{sec:FramingAxis}, defined in a tubular neighborhood of the magnetic axis, we now present an iterative scheme to directly compute the Hamiltonian and normal form coordinates for a given nondegenerate
magnetic axis. 


\subsubsection{Series Expansions}
	
In this section we construct the canonical Hamiltonian $H$ for Beltrami and vacuum magnetic fields in the neighborhood of a magnetic axis $r_0$ by solving \eqref{eq:BeltramiAlpha} for the vector potential $\alpha$.
	
It will be convenient to solve the conditions of \eqref{eq:BeltramiAlpha} for $\ta$, the vector potential in the $(s,u)$ rotation minimizing coordinates since the metric is diagonal on $\tilde{M}$. Once this is done, we will use \eqref{pibDefine} to impose the constraint that there exists $\alpha = \pi_{B}^*\ta$, that is, that $\ta$ pushes forward to a periodic one-form  on $M$ when rotated to a periodic frame through some $\gamma(s)$ satisfying \eqref{eq:gammaCondition}.
		
For ease in computing canonical coordinates in \cref{sec:canonicalcoordinates}, it will be convenient to
use gauge freedom to choose a representation of $\ta$ different from \eqref{eq:formBasis}.

\begin{lemma}\label{lem:guageChoice}
Up to gauge freedom, any real-valued one-form on $\tilde{M}$ can formally be written as
\begin{equation}\label{eq:alphaAssumption}
	\ta(u,\bu,s) = \ta_s(u,\bu,s) ds -\tfrac{1}{4 i}\ta_u(u,\bu,s) (u d\bu - \bu du),
\end{equation}
where $\ta_s,\ta_u:\tilde{M}\to\R$. Furthermore if the original form is analytic at $u=0$,
then so is $\ta$. 
\end{lemma}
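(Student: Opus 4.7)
The plan is to exhibit an explicit real-valued gauge function $\chi$ such that $\ta + d\chi$ takes the form \eqref{eq:alphaAssumption}. Writing an arbitrary real one-form in real coordinates as $\ta = a_s\,ds + a_\tx\,d\tx + a_\ty\,d\ty$, and using \eqref{eq:basisVectors} to rewrite the target \eqref{eq:alphaAssumption} as
\[
\ta_s\,ds + \tfrac{1}{2}\,\ta_u\,(\tx\,d\ty - \ty\,d\tx),
\]
the essential constraint on the transformed disk components $\tilde a_\tx, \tilde a_\ty$ is the Poincar\'e (radial) gauge condition $\tx\,\tilde a_\tx + \ty\,\tilde a_\ty = 0$, i.e.\ the vanishing of the contraction with the dilation vector field $X = \tx\,\partial_\tx + \ty\,\partial_\ty$ on the disk factor.

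To enforce this, I would take the standard Poincar\'e-gauge primitive
\[
\chi(\tx,\ty,s) := -\int_0^1 \bigl[\tx\,a_\tx(t\tx,t\ty,s) + \ty\,a_\ty(t\tx,t\ty,s)\bigr]\,dt.
\]
A short computation, based on the identity $X\bigl[\tx\,a_\tx(t\tx,t\ty,s) + \ty\,a_\ty(t\tx,t\ty,s)\bigr] = \tfrac{d}{dt}F(t\tx,t\ty,s)$ with $F := \tx\,a_\tx + \ty\,a_\ty$, gives $X\chi = -F$, so that the disk components of $\ta + d\chi$ satisfy the Poincar\'e gauge. The integrand is real, so $\chi$ is real; and since the integral is absolutely convergent with analytic integrand in the parameters $(\tx,\ty,s)$, analyticity of the original coefficients at $u=0$ is transmitted to $\chi$ and hence to $\ta + d\chi$.

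To finish, I would invoke Hadamard's lemma: the gauge identity $\tx\,\tilde a_\tx + \ty\,\tilde a_\ty = 0$ forces $\tilde a_\ty(0,\ty,s)=0$ and $\tilde a_\tx(\tx,0,s)=0$, so Hadamard's lemma produces functions $g,h$ with $\tilde a_\ty = \tx\,g$ and $\tilde a_\tx = -\ty\,h$, explicitly as radial integrals like $g(\tx,\ty,s) = \int_0^1 \partial_\tx \tilde a_\ty(t\tx,\ty,s)\,dt$, which inherit the regularity class of the original. Re-substituting into the gauge condition yields $\tx\ty(g - h) = 0$, hence $g = h$; defining $\ta_u := 2g$ then gives exactly $\tilde a_\tx = -\tfrac{1}{2}\ty\,\ta_u$, $\tilde a_\ty = \tfrac{1}{2}\tx\,\ta_u$. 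Converting back to $(u,\bu)$ via the identity $\tx\,d\ty - \ty\,d\tx = -\tfrac{1}{2i}(u\,d\bu - \bu\,du)$ then recovers \eqref{eq:alphaAssumption}. The only mildly delicate step is verifying that the Hadamard factorisation preserves analyticity or the formal power-series class, but the explicit radial-integral representation of $g$ and $h$ makes this transparent.
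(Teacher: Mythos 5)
Your argument is correct, and it reaches \eqref{eq:alphaAssumption} by a genuinely different route than the paper. The paper stays in the complex coordinates $(u,\bu)$ and works formally: it imposes $d(a-\ta)=0$ componentwise, notices that the resulting determining equation \eqref{eq:secondcond} for $\ta_u$ involves the operator $f\mapsto\partial_\bu(\bu f)+\partial_u(uf)$, which multiplies a degree-$k$ monomial by $2+k$ and is therefore invertible on formal series, and then back-substitutes to produce $F$ and set $\ta_s=a_s-\partial_s F$. You instead build the gauge function explicitly as (minus) the fibrewise homotopy operator applied to the disk components, which is exactly the Poincar\'e radial gauge $\iota_X(\ta+d\chi)=0$ for the dilation field $X=\tx\,\partial_\tx+\ty\,\partial_\ty$, and then use Hadamard's lemma twice to collapse the two surviving disk components into the single real function $\ta_u$; the $ds$-component is absorbed into $\ta_s$ with no further work. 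The two constructions agree: your radial integral acts on a degree-$k$ monomial of the curl of the disk part precisely as division by $k+2$, so on formal series you are inverting the same operator the paper does. What your version buys is an explicit closed-form primitive $\chi$ that works in the smooth and analytic categories rather than only formally, and the analyticity assertion of the lemma drops out of the integral representations for $\chi$, $g$, and $h$, a point the paper handles only implicitly (its division by $2+k_1+k_2\ge 2$ does not enlarge Taylor coefficients, but this is not spelled out). What the paper's version buys is that the determining equation for $\ta_u$ is exactly the object reused in the recursion \eqref{eq:BeltramiSeriesEqns}, so its proof doubles as setup for \cref{prop:FormalSeriesSolution}. The only steps you should make explicit if you write this up are the two continuity extensions (from $\tx\ne 0$, resp.\ $\tx\ty\ne 0$, to everywhere), which are immediate, and the observation that for formal series Hadamard's lemma reduces to divisibility of a power series with no pure $\ty$-monomials (resp.\ $\tx$-monomials) by $\tx$ (resp.\ $\ty$).
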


\begin{proof}
	An arbitrary one-form $a$ \eqref{eq:formBasis} is equivalent to \eqref{eq:alphaAssumption} under a gauge transformation if there exists a function $F:\tilde{M}\to\R$  such that $a - \ta = dF$.
In this case, necessarily  $da - d\ta = 0$. In fact, since $\tilde{M}$ is simply connected, this condition is also sufficient. Writing out each component of the condition yields,
\begin{align}
	\partial_u \ta_s      &= \tfrac{1}{4i}\bu\partial_s \ta_u + \partial_u a_s- \partial_sa_u ,
				\label{eq:firstcond}\\
	 \partial_\bu(\bu \ta_u) + \partial_{u}(u \ta_u) 
	                         &= 4i( \partial_\bu a_u - \partial_u \bar{a}_u), \label{eq:secondcond}
\end{align}
since the third, $ds \wedge d\bu$, component simply gives the complex conjugate of \eqref{eq:firstcond}.
First consider \eqref{eq:secondcond} as an equation determining a real-valued function $\ta_u$ 
given an arbitrary complex valued $a_u$.
Indeed, this can be solved at least formally about $u = 0$. To see this, expand each function 
as a power series in  $u,\bu$. Note that the operator $f \mapsto\partial_\bu(\bu f) + \partial_u (u f) $ 
maps monomials $u^{k_1} \bar{u}^{k_2} \mapsto  (2+k_1 + k_2)u^{k_1} \bar{u}^{k_2}$. 
Hence, for each monomial, we can solve the equation by simply dividing by $ (2+ k_1 + k_2)$, which is always nonzero.
	
Given such a solution to \eqref{eq:secondcond} there then exists, for each value of $s \in \R $, a function $F$ such that
\begin{equation}\label{eq:alphausol}
	a_u - \tfrac{1}{4i}\bu\ta_u = \partial_u F,\qquad 
	\bar{a}_u + \tfrac{1}{4i} u\ta_u = \partial_{\bu} F.
\end{equation}
Since $\ta_u$ is smooth in $s$ then so is $F$. 
Substituting this form into \eqref{eq:firstcond} yields
\[ 
	\partial_u \ta_s = \partial_u a_s -\partial_s \partial_u F. 
\]
Hence, taking $\ta_s =  a_s - \partial_s F$ gives a solution to \eqref{eq:firstcond}.
\end{proof}

Now we use the form \eqref{eq:alphaAssumption} to solve the Beltrami equation \eqref{eq:BeltramiAlpha}.
In the metric \eqref{eq:metrics}, the covariant components \eqref{eq:covariant} become
\begin{equation}\label{eq:Bcov}
	\begin{aligned}
	B_s        & = \tfrac12 h_s ( \partial_\bu (\bu\ta_u) + \partial_u (u \ta_u) ) ,\\
	B_u        & = \frac{i}{h_s}\left( \partial_u \ta_s - \tfrac{1}{4i}\bu\partial_s\ta_u \right) , \\
	\bar{B}_u  & = - \frac{i}{h_s}\left( \partial_\bu \ta_s + \tfrac{1}{4i}u\partial_s \ta_u \right) .
	\end{aligned} 
\end{equation}
Applying the operator $\star d$ once more to obtain \eqref{eq:BeltramiAlpha} gives
\begin{equation}\label{eq:stardstard}
	\begin{aligned}
		2 i h_s ( \partial_\bu B_u - \partial_u \bar{B}_u ) &= k B_s ,\\
		i\left( \partial_u B_s - \partial_s B_u \right) &= k h_s B_u , \\
		-i\left( \partial_\bu B_s - \partial_s \bar{B}_u \right) &= k h_s \bar{B}_u .
	\end{aligned}
\end{equation}
This set, upon substitution for $B$ in terms of $\alpha$ from \eqref{eq:Bcov}, corresponds to three \textsc{pde}s for the vector potential components $\ta_s$ and $\ta_u$.
	
To formally solve \eqref{eq:stardstard} for $\alpha$ we expand each component in a series in $u$ and $\bu$, 
\begin{equation}\label{eq:alphaseries}
	\begin{aligned}
		\ta_s \sim \sum_{j=0} \ta_s^j(s,u,\bu), \\
			\ta_u \sim \sum_{j=0} \ta_u^j(s,u,\bu),
	\end{aligned}
\end{equation}
where each $\ta_\times^j$ is a degree $j$, homogeneous polynomial in $u,\bu$ with complex coefficients that are functions of $s$. Substituting the series expansion \eqref{eq:alphaseries} into the Beltrami condition \eqref{eq:stardstard} then gives
\begin{subequations}\label{eq:BeltramiSeriesEqns}
	\begin{align}
		\partial_{u\bu} \ta_s^{n} &= \tfrac{1}{4}\left\{  \im\left(\partial_\bu 
			(\bu \partial_s \ta_u) + \tfrac12 h_s^{-1} \kappa_u\bu\partial_s \ta_u\right)
			-2\re\left(\kappa_u \partial_{u}\ta_s \right) - k B_s\right\}_{n-2} ,
			\label{eq:BeltramiSeriesEqns_as} \\
		L_u\ta_u^{n} &= \left\{ \tfrac12 h_s^{-2}\bar{\kappa}_u B_s + h_s^{-1} \partial_s B_u 
			- i k B_u \right\}_{n-1} , \label{eq:BeltramiSeriesEqns_au} \\
		\bar{L}_u\ta_u^{n} &= \left\{\tfrac12 h_s^{-2}\kappa_u B_s + h_s^{-1} \partial_s \bar{B}_u 
			+i k \bar{B}_u \right\}_{n-1} , \label{eq:BeltramiSeriesEqns_aub}
		\end{align}
\end{subequations}
where $L_u$ is defined by
\[
	L_u \ta =: \tfrac{1}{2} \partial_u (\partial_\bu \bu \ta_u +\partial_u u \ta_u) ,
\] 
and $\bar{L}_u$ is the equivalent under $u \leftrightarrow \bu$.
The braces $\{\cdot\}_j$ in \eqref{eq:BeltramiSeriesEqns} denote the $j^{th}$ order term from the 
formal series \eqref{eq:alphaseries}. 
The right hand sides of \eqref{eq:BeltramiSeriesEqns} depend on
the components of $\alpha$ to at most $n-1$.
As a consequence, the equations can be solved
iteratively. We formulate this as a proposition.
	
\begin{proposition}\label{prop:FormalSeriesSolution}
For any smooth $\gamma(s)$ satisfying \eqref{eq:gammaCondition} there is a formal solution to \eqref{eq:BeltramiSeriesEqns} of the form
\begin{equation}\label{eq:alphasols}
	\begin{aligned}
		\ta_s^{n}(u,\bu,s) &= A_n(s)z^n + \bar{A}_n(s)\bz^n + 
		R_n(z,\bz,s) , \\
	\ta_u^{n}(u,\bu,s) &= \alpha_z^{n}(z,\bz,s),
	\end{aligned}
\end{equation}
where
\begin{equation}\label{eq:zDefine}
	z = e^{-i\gamma}u ,
\end{equation}
and $R^n$ and $\alpha_z^n$ are real, degree-$n$ homogeneous polynomials in $z,\bz$
with coefficients periodic in $s$ and dependent on $\ta_s^{k},\ta_u^k$ for 
$k<n$, and each $A_n$ is a free, complex valued function. 
In particular, if each $A_n$ is taken $T$-periodic in $s$ then the formal series 
$\ta_s,\ta_z$ are $T$-periodic in $s$.
	
Moreover, by subjecting \eqref{eq:BeltramiAlpha} to the additional constraint 
$\star d \alpha|_{r_0} = B_0 ds$, we have $ \ta_u^0 = \alpha_z^0 =  B_0(s)$, and we can choose
$\alpha_s^0=\alpha_s^1=0$ without changing $B$.
\end{proposition}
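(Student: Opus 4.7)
The plan is to proceed by strong induction on the degree $n$. At each step I assume that the lower-order pieces $\ta_s^k, \ta_u^k$ for $k<n$ have been constructed as real, $s$-periodic, degree-$k$ homogeneous polynomials in $u,\bu$. Since the metric factor $h_s^{-1} = (1-\tfrac12(\bar\kappa_u u + \kappa_u\bu))^{-1}$ expands as a formal power series in $u,\bu$ with $T$-periodic coefficients, the right-hand sides of \eqref{eq:BeltramiSeriesEqns_as}--\eqref{eq:BeltramiSeriesEqns_aub} are then known homogeneous polynomials of the asserted degrees with $T$-periodic coefficients in $s$. The inductive step thus reduces to a purely algebraic problem: to invert the linear operators $\partial_{u\bu}$, $L_u$, and $\bar L_u$ on the finite-dimensional space of degree-$n$ homogeneous polynomials in $(u,\bu)$.

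Next I would compute the action of these operators on the monomial basis $\{u^j\bu^{n-j}\}_{j=0}^{n}$. Direct calculation gives
\[
    \partial_{u\bu}(u^j \bu^{n-j}) = j(n-j)\, u^{j-1}\bu^{n-j-1},\qquad
    L_u(u^j \bu^{n-j}) = \tfrac{j(n+2)}{2}\, u^{j-1}\bu^{n-j},
\]
and analogously $\bar L_u(u^j \bu^{n-j}) = \tfrac{(n-j)(n+2)}{2}\, u^j \bu^{n-j-1}$. Hence the kernel of $\partial_{u\bu}$ on the degree-$n$ component is the two-dimensional real subspace spanned by $\{u^n, \bu^n\}$, and $\partial_{u\bu}$ is surjective onto degree $n-2$; the operators $L_u, \bar L_u$ are each surjective onto degree $n-1$, with one-dimensional kernels $\Ker L_u = \langle \bu^n\rangle$ and $\Ker \bar L_u = \langle u^n\rangle$. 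Writing $\ta_u^n = \sum_j c_j^n u^j \bu^{n-j}$, equation \eqref{eq:BeltramiSeriesEqns_au} determines $c_j^n$ for $j\geq 1$ while \eqref{eq:BeltramiSeriesEqns_aub} does so for $j\leq n-1$, and together with the reality condition $c_{n-j}^n = \bar c_j^n$ all coefficients are pinned down. The overlap on $1\leq j\leq n-1$ imposes a compatibility condition between the two equations, which I would verify by appealing to the differential identity $d(dB^\flat - k\,d\alpha)=0$: this follows automatically from $d^2=0$ and, read off order by order in $(u,\bu)$, is precisely the required consistency.

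Equation \eqref{eq:BeltramiSeriesEqns_as} then determines $\ta_s^n$ only modulo the kernel of $\partial_{u\bu}$. I would select a canonical real particular solution $R_n$ containing no pure $u^n$ or $\bu^n$ monomials, and parametrize the kernel as $C_n(s) u^n + \bar C_n(s)\bu^n$ for an arbitrary complex-valued function $C_n$. Rewriting in the rotated coordinate $z = e^{-i\gamma(s)}u$ and setting $A_n(s) := C_n(s)\, e^{in\gamma(s)}$ produces the asserted form $\ta_s^n = A_n(s)\, z^n + \bar A_n(s)\, \bz^n + R_n(z,\bz,s)$ with $A_n$ as the free parameter. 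Because every quantity produced by the induction depends on $s$ only through $T$-periodic data (namely $\kappa_u$, $B_0$, and the previously chosen $A_k$), the formal series for $\ta_s,\ta_u$ is $T$-periodic in $s$ as soon as each $A_n$ is chosen $T$-periodic.

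Lastly, the base cases are handled by direct calculation. The normalization $\star d\alpha|_{r_0} = B_0\,ds$ applied to \eqref{eq:Bcov} at order zero forces $\ta_u^0 = \alpha_z^0 = B_0(s)$, and the free constants $A_0, A_1$ coming from the kernel of $\partial_{u\bu}$ at orders $0,1$ can be used to set $\alpha_s^0 = \alpha_s^1 = 0$ without changing $B$, since constant and linear terms in $\alpha$ along $r_0$ are pure gauge. The main obstacle in carrying this out is the compatibility between \eqref{eq:BeltramiSeriesEqns_au} and \eqref{eq:BeltramiSeriesEqns_aub} at each order: although conceptually automatic from $d^2 = 0$, making this explicit in the $(u,\bu)$ expansion---after the power-series expansion of $h_s^{-1}$ is substituted---requires careful bookkeeping of how the operators $\partial_s$, $\partial_u$, $\partial_{\bu}$ interleave with the metric factors and the inductive hypothesis on lower-order terms.
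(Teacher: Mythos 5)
Your proposal follows essentially the same route as the paper: induction on degree, solving \eqref{eq:BeltramiSeriesEqns_as} first for $\ta_s^n$ via surjectivity of $\partial_{u\bu}$ (with its two-dimensional kernel supplying the free functions $A_n$), then solving the pair \eqref{eq:BeltramiSeriesEqns_au}--\eqref{eq:BeltramiSeriesEqns_aub} for $\ta_u^n$, with consistency traced back to $d^2=0$. Your explicit diagonalization of $\partial_{u\bu}$, $L_u$, $\bar L_u$ on monomials is a concrete substitute for the paper's rank--nullity count and is fine. However, two points fall short. First, the compatibility between \eqref{eq:BeltramiSeriesEqns_au} and \eqref{eq:BeltramiSeriesEqns_aub} --- which you yourself flag as the crux --- is asserted rather than verified, and the bare identity $d(dB^\flat - k\,d\alpha)=0$ is not by itself enough: it relates the exterior derivatives of \emph{all three} components of the residual two-form, so to isolate the needed cross-derivative condition between the right-hand sides of (b) and (c) you must also invoke the fact that the $du\wedge d\bu$ component (equation (a)) has already been solved to the appropriate order. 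The paper carries this computation out explicitly, showing the obstruction equals $-2k h_s^{-1}\,\div B$ only after substituting the first equation of \eqref{eq:stardstard}; your sketch leaves precisely this step, and its order bookkeeping, undone.

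Second, the periodicity argument contains an error: $\kappa_u$ is \emph{not} $T$-periodic --- the Bishop frame itself fails to close up after one period --- and consequently the coefficients of the lower-order pieces, viewed as polynomials in $u,\bu$, are not periodic either. Your inductive hypothesis as stated ("$s$-periodic \ldots polynomials in $u,\bu$") is therefore false in general. What is periodic is $\kappa_z = e^{-i\gamma}\kappa_u$, so that $h_s = 1 - \tfrac12(\bar{\kappa}_z z + \kappa_z\bz)$ has $T$-periodic coefficients only in the rotated variable $z = e^{-i\gamma}u$. The rotation you introduce at the end to define $A_n$ must in fact be built into the induction hypothesis from the outset, which is exactly how the paper phrases \eqref{eq:alphasols} and why it records that $\kappa_z$, and hence $h_s$, is periodic before running the induction.
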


\begin{proof}		
We prove the proposition by induction on the degree in \eqref{eq:alphasols}. 
As the right hand side of \eqref{eq:BeltramiSeriesEqns} vanishes for $n = 0$,
and for $n=1$ for (a), it follows  that $\alpha_z^0, \alpha_s^0$, and $\alpha_s^1$ are free functions.
Make the particular choice $\alpha_s^0 = \alpha_s^1 = 0$ and 
$\alpha_z^0 = B_0(s)$ where $B_0(s) = B_s(0,0,s) = B^s(0,0,s)$ is the magnetic field on axis.

Assume the result is true for all $k\leq n$ and consider order $n$. The right hand side of \eqref{eq:BeltramiSeriesEqns_as} --- evaluated at order $n-2$ --- depends on $\alpha_z^{j}$ for $j\leq n-2$ and $\alpha_s^j$ for $j\leq n-1$. For the second and third equations, we must know these components order $n-1$. 
As a consequence, the first equation can be solved first to obtain $\alpha_s^n$ before solving for $\alpha_z^n$.

The left hand side of \eqref{eq:BeltramiSeriesEqns_as} can be thought of as a linear operator on the vector space of real, homogeneous polynomials. Specifically, let $\mathcal{H}_n^\gamma$ be the vector space of homogeneous degree-$n$ real polynomials in $z, \bz$ with $T$-periodic coefficients. Then
\[ 
	\partial_{u\bu} : \mathcal{H}_n^\gamma \to \mathcal{H}_{n-2}^\gamma.
\]
In order to get a solution to \eqref{eq:BeltramiSeriesEqns_as}, we need to prove the right hand side 
is in the image of $\partial_{u\bu}$. Necessarily, the right hand side must be in $\mathcal{H}_{n-2}^\gamma$. 
With the assumption that the proposition is true for all $ n\leq k-1$, a calculation confirms this is indeed true. In order to see this, note that $\kappa_z = e^{-i\gamma} \kappa_u$ is periodic by \eqref{eq:metrics}; as a consequence
$h_s$ is also periodic.
	
Now we need to check the right hand side of \eqref{eq:BeltramiSeriesEqns_as} is in the image of $\partial_{u\bu}$. Note that $\dim(\mathcal{H}_n^\gamma) = n+1$.
The kernel of $\partial_{u \bu}$ acting on $\mathcal{H}_n^\gamma$ is spanned by $\re(u^{n}), \im(u^{n})$. 
Hence, as the kernel is two dimensional and $\dim(\mathcal{H}_{n-2}^\gamma) = n-1$, then by the rank-nullity theorem it is guaranteed that $\partial_{u\bu}$ is surjective. Hence, assuming the proposition for $n\leq k-1$, there is a solution to \eqref{eq:BeltramiSeriesEqns_as} for $n = k$. 
	
Now, a necessary and sufficient condition for \cref{eq:BeltramiSeriesEqns_au,eq:BeltramiSeriesEqns_aub}
is for $\div B = 0$.
Indeed, these equations correspond to the system \eqref{eq:stardstard}. By knowing $\alpha_s$ to degree $n$,
and $\alpha_z$ to degree $n-1$, we can obtain $B_u$ to order $n$. 
We want to show that the final two equations are consistent for $B_s$.
In turn this will show that \eqref{eq:BeltramiSeriesEqns} is consistent. 
A quick rearrangement yields
\[  
	\partial_u B_s  = \partial_s B_u - i h_s k B_u, \qquad
	\partial_\bu B_s   = \partial_s \bar{B}_u + i h_s k \bar{B}_u.
\]
Since  $\tilde{M}$ is simply connected, a necessary and sufficient condition for these equations 
to be solvable is for
\[
	\partial_\bu\left(\partial_s B_u - i h_s k B_u \right) - 
	\partial_u\left(\partial_s \bar{B}_u + i h_s k \bar{B}_u\right) = 0.  
\]
This condition is satisfied as,
\begin{align*}
		\partial_\bu\left(\partial_s B_u - i h_s k B_u \right) - 
		\partial_u\left(\partial_s \bar{B}_u + i h_s k \bar{B}_u\right) &=  
		   \partial_s\left(\partial_\bu B_u - \partial_u \bar{B}_u\right) - 
		   i k  \partial_\bu \left(h_s \bar{B}_u\right) -i k \partial_u \left(h_s \bar{B}_u\right) \\
		&= -i k \partial_s(\tfrac12 h_s^{-1} B_s ) - i k  \partial_\bu \left(h_s \bar{B}_u\right) -
		i k \partial_u \left(h_s \bar{B}_u\right) \\
		&= -2 k h_s^{-1} \div B.
\end{align*}
It follows that the equations are consistent since $\div B = 0$ by assumption. 
	
Since \eqref{eq:BeltramiSeriesEqns} is consistent, a solution at order $n=k$ will exist provided the all coefficients are known for $n \leq k-1$. The result follows by induction.
\end{proof}
	
From the preceding proposition we can obtain a formal solution to
\begin{equation}\label{eq:alphasol2}
	\ta = \alpha_s(z,\bz,s) ds - 
	\tfrac{1}{4i}\alpha_z(z,\bz,s)\left( u d\bu - \bu du \right) .
\end{equation}
Consequently, by choosing $\gamma$ that satisfies \eqref{eq:gammaCondition} and then rotating through $R_{-\gamma}$ we obtain periodic coordinates
and a primitive form $\alpha$ satisfying \eqref{eq:BeltramiAlpha} so that
\begin{equation}\label{eq:alphainz}
	\alpha = \left(\tfrac12\tau\alpha_z(z,\bz,s) z \bz + \alpha_s(z,\bz,s)\right)ds - 
	\tfrac{1}{4i}\alpha_z(z,\bz,s)\left( zd\bz - \bz dz \right),
\end{equation}
where $\tau(s) := \gamma^\prime(s)$. Note that, if the Frenet-Serret frame exists and $\gamma$ is taken to be the integral torsion \eqref{eq:gamma}, then $\tau$ is the torsion of $r_0$.
	
The explicit solution to order $n=4$ of equations \eqref{eq:BeltramiSeriesEqns} is given in \cref{sec:ExplicitExpressions}. The leading order terms will be needed in \cref{sec:NormalFormCoordinatesMagAxis} and are given by
\[
	\begin{aligned}
		\alpha_s^2 &= A_2 z^2 + \bar{A}_2 \bz^2 - \tfrac14 k B_0 z \bz , \\
		\alpha_z^0 &= B_0(s) , \\
		\alpha_z^1 &= \tfrac{1}{3} B_0 \left(\bar{\kappa}_z z +  \kappa_z\bz \right) , \\
	\end{aligned}
\]
where the $A_2$ is a complex-valued, $\Tee$-periodic function of $s$, and $\kappa_z = e^{-i\gamma(s)} \kappa_u$. Note that, if $\gamma$ can be taken as the integral torsion \eqref{eq:gamma}, we have $\kappa_z = \bar{\kappa}_z= \kappa$ the curvature and $\tau$ the torsion of $r_0$.


\subsubsection{Canonical coordinates}\label{sec:canonicalcoordinates}
	
We have now computed in \cref{prop:FormalSeriesSolution} the vector potential $\alpha$. By finding canonical coordinates, we will be able to get an expression for the Hamiltonian $H$. A method to compute canonical coordinates using series expansions could be obtained using the work of Cary and Littlejohn \cite{Cary83}. However, the canonical coordinates are easily obtained as a consequence of our choice of gauge from \cref{lem:guageChoice}.
	
In complex coordinates $Z = X + i Y$, the canonical Liouville one-form is
given, up to a closed one-form, by
\begin{equation}\label{eq:canonComplex}
	\begin{aligned}
		\alpha &= \tfrac{1}{4i}\left( Z d\bZ - \bZ dZ  \right) - H ds \\
				&= \tfrac12\left(Y dX - X dY\right) - H ds ,
	\end{aligned}
\end{equation}
where $H:M\to\R$ is the Hamiltonian. Recalling
that $\alpha_z$ is real, \eqref{eq:alphainz} can be transformed into the form \eqref{eq:canonComplex} using
\begin{equation}\label{eq:canonicalTranform}
	Z = \bar{z}\sqrt{\alpha_z(s,z,\bz)}.
\end{equation}
and $H$ becomes
\begin{equation}\label{eq:Hamiltonian}
	H = -\tfrac12 \tau(s) Z \bar{Z} - \alpha_s.
\end{equation} 
The first four orders of $H$ are explicitly given in \cref{sec:ExplicitExpressions}. The leading order is
\begin{equation}\label{eq:H0}
H_0 	= -\bar{A}_2 B_0^{-1} Z^2 - A_2 B_0^{-1} \bZ^2 + \left(\tfrac14 k-\tfrac12 \tau \right) Z\bZ \\ 
\end{equation}
	

\subsubsection{Recovering the magnetic field}
We have found canonical coordinates $(Z,\bZ)$ which put the one-form $\alpha$ into the form \eqref{eq:canonComplex}. The beauty of this form is that almost all of the information about the magnetic vector field is stored in the single function $H$. In order to extract the magnetic field from $H$, we can first use the fact that $\iota_B\Omega = \beta = d\alpha$. Explicitly, we have that 
$\Omega = \rho  dZ\wedge d\bZ \wedge ds$ where $\rho = \sqrt{g_Z}$ is the density from the metric $g_Z$ induced by the transformation to canonical coordinates. Then
\begin{align*}
	\iota_B\Omega &= \rho B_Z d\bZ\wedge ds - \rho B_{\bZ} dZ\wedge ds 
		+ \rho B_s dZ\wedge d\bZ, \\
	 d\alpha &=  \tfrac{1}{2i} dZ\wedge d\bZ - dH \wedge ds. 
\end{align*}
It follows that
\[
	B_Z = -\rho^{-1} \partial_{\bZ} H,\qquad 
	B_{\bZ} = \rho^{-1} \partial_{Z} H,\qquad
	B_s = \tfrac{1}{2i} \rho^{-1}.
\]
Note that these are the Euler-Lagrange equations scaled by $\tfrac{1}{2i} \rho^{-1}$.
	
The density $\rho$ is a complicated expression, even when computed as a power series in $Z,\bZ$. However, to compute the normal form, we will not be working directly with $B$, but instead with the Hamiltonian vector field for $H$,
\[ 
	B_H = 2i \rho B = -2i \partial_{\bZ} H \partial_Z + 2i \partial_Z H \partial_{\bZ} + \partial_s .
\]
Consequently, the complication of computing $\rho$ is bypassed.

\subsection{Normal form coordinates near magnetic axis}\label{sec:NormalFormCoordinatesMagAxis}
	

Normal form theory will be useful in the current analysis of near-axis
expansions as it will allow for the elimination of as many terms as possible
in the series expansion of $H(s,Z,\bZ)$. These removable terms are dependent
on the quadratic component $H_0$ given explicitly in \eqref{eq:H0}.
Specifically, as was highlighted by \cref{cor:NormalForm}, if the axis is elliptic we need to compute the on-axis rotational transform $\iotab_0$, or, if the axis is hyperbolic, we need the on-axis expansivity $\nu_0$.

It is possible to get an explicit formula for these constants in terms of geometric quantities, such as the total torsion and curvature of the magnetic axis.
First, note the leading order dynamics from $H_0$ is given by
\begin{equation}\label{eq:leadingorderdynamics}
	\dot{Z} = -2i\partial_{\bZ} H_0 = 4i \bar{C}_2(s)\bZ - i\left(\tfrac12 k - \tau(s)\right)Z ,
\end{equation}
with $C_2(s) := B_0(s)^{-1}\bar{A}_2(s)$.
Theorem \ref{thm:floquet} guarantees a canonical transformation 
$w = F(s,Z,\bZ)$ that is linear in $Z, \bZ$ bringing \eqref{eq:leadingorderdynamics} into
\begin{equation}\label{eq:veqns}
	\dot{w} = i \iotab_0 w,
\end{equation}
in the elliptic case and
\begin{equation}
	\dot{w} = \nu_0 \bar{w}.
\end{equation}
in the hyperbolic case.

In order to obtain $F$ explicitly, it is useful to use geometry.
In the elliptic case, the transformation $F$ takes invariant tori of the linearized dynamics, 
that are in a tubular neighborhood $U$ of the axis, into other invariant tori in 
$ D^2\times \bS^1$ that are given simply by level sets of $w \bar{w} $ with $w\in D^2 $. 
Similarly, in the hyperbolic case, $F$ maps invariant surfaces of the linearized dynamics, that are in a tubular neighborhood of $U$ of the axis, to invariant surfaces that are level sets
of $w^2 + \bar{w}^2$ in $ D^2 \times \bS^1$.

In either case, this transformation can be broken down into a rotation
\begin{equation}\label{eq:deltaRotation}
	Z = e^{i\delta(s)}V,
\end{equation}
that aligns the principal axes of each ellipse or hyperbola
with the coordinate axes. Here $\delta(\Tee+s) - \delta(t) = 2\pi j$  in
the elliptic or direct hyperbolic case, and $\delta(t+\Tee) - \delta(t)= (2j+1)\pi$ 
in the reversed hyperbolic case,  for some $j\in \Z$.
This is followed by a canonical scaling 
\begin{equation}\label{eq:etaScaling}
	V = \cosh\eta(s) v - \sinh\eta(s)\bar{v},
\end{equation}
for $\Tee$-periodic $\eta(s)$, in order to scale each ellipse or hyperbola to 
obtain symmetry between the major and minor axes.

Applying the transformations to \eqref{eq:leadingorderdynamics} yields
\begin{equation}
	\begin{aligned}
	\dot{v} &= -\tfrac{i}2 \left( 4\left( C_2e^{2 i \delta} 
		+ \bar{C}_2 e^{-2i\delta} \right)\sinh 2\eta 
		+ ( k + 2\delta^\prime - 2\tau)\cosh 2 \eta \right) v \\
		&\qquad + i \left( 4 e^{-2i\delta} \bar{C}_2 \cosh^2\eta  
		+ 4e^{2i\delta}C_2 \sinh^2\eta + \left(\tfrac12 k + \delta^\prime - \tau \right)\sinh 2\eta 
		- i \eta^\prime \right)\bv .
	\end{aligned}
\end{equation}

In order for ellipses to be invariant we require $\frac{d}{ds}v\bv = 0$. Computing this condition yields
\begin{equation}\label{eq:etadiffeq}
	C_2(s)\cosh 2\eta = -\tfrac{1}{4}e^{-2i\delta}\left(  \left(\tfrac12 k + \delta^\prime -\tau \right)\sinh 2 \eta + i\eta^\prime\cosh 2\eta\right),
\end{equation}

In order for hyperbolas to be invariant we require $\frac{d}{ds}(v^2 + \bv^2) = 0$.
Computing this condition yields
\begin{equation}
	\label{eq:etadiffeq_hyp}
	C_2(s) \sinh 2\eta = -\tfrac{1}{4}e^{-2i\delta}\left(  \left(\tfrac12 k + \delta^\prime - \tau \right)\cosh 2 \eta + i\eta^\prime \sinh 2\eta \right).
\end{equation}
Note that there is an additional constraint here; whenever $\eta = 0$ we must have $\delta^\prime = \tfrac12 k + \tau$. In fact it must be asserted that $\delta^\prime = \tfrac12 k + \tau + F(s)\sinh 2\eta$ for some function $F(s)$.

Finding the functions $\delta,\eta$ from \eqref{eq:etadiffeq} or \eqref{eq:etadiffeq_hyp} given the
function $C_2(s)$ can not, in general, be done analytically. 
Indeed this would amount to solving the general Floquet problem for \eqref{eq:leadingorderdynamics}. 
However, when designing useful magnetic fields, it is
perhaps easier to work backwards by choosing $\delta,\eta$ and
letting them determine the free function $C_2$.

Applying the two transformations to the leading order Liouville one-form $\alpha^0$ yields
\begin{equation}\label{eq:alphav}
	\alpha^0 = \tfrac{1}{4i}\left( v d\bv - \bv dv \right) - H_0 ds ,
\end{equation}
where the Hamiltonian has transformed to either
\[
	H_0 = \tfrac12\left( \tfrac12 k + \delta^\prime -\tau \right)\sech (2\eta) v \bv,
\]
for the elliptic case or
\[
	H_0 = \tfrac14\left(\tfrac12 k + \delta^\prime -\tau \right) \csch(2\eta) (v^2+\bv^2) .
\]
for the hyperbolic case.
	
In the elliptic case, the dynamics are now given by
orbits contained in invariant tori $\tfrac12v\bv = const$ with frequency
of poloidal rotation
\begin{equation}\label{eq:iotaDefinition}
	\iota(s) = \int_0^s (\tfrac12 k + \delta^\prime - \tau)\sech 2\eta ds.
\end{equation}

To complete the transformation $F$ we require that the rotation rate is
constant along the toroidal angle $s$. This can be done by averaging through
the nonautonomous canonical transformation
\begin{equation}\label{eq:iotaRotation}
	v = e^{-i \left(\iota(s) - \iotab_0 s \right)}w,\qquad 
	\iotab_0 := \iota(\Tee)/\Tee,
\end{equation} 
to transform the Hamiltonian to
\[
	H_0 = \tfrac{1}{2}\iotab_0 w \bw ,
\]
as desired.
	
In the hyperbolic case, the dynamics have been reduced to orbits contained in invariant level sets of
$\tfrac14(v^2 + \bv^2) = const$ with a contraction rate of
\begin{equation}\label{eq:nudefinition}
	\nu(s) = \int_0^s (\tfrac12 k + \delta^\prime - \tau ) \csch(2 \eta) ds.
\end{equation}
To complete the transformation $F$ we require that this contraction rate be constant along the
toroidal angle $s$. This can be done through a canonical rescaling using the average contraction rate.
Specifically, make the canonical transformation
\begin{equation}
	v = \cosh(\nu(s) -  \nu_0 s)w - i\sinh (\nu(s) -  \nu_0 s)\bw, \quad \nu_0 := \tfrac{1}{\Tee}\nu(\Tee) .
\end{equation} 
This transforms the Hamiltonian to
\[
	H_0 = \tfrac{1}{4}\nu_0 (w^2+\bw^2) ds,
\]
as desired. 

We summarize the results of this subsection as the following lemma.
\begin{lemma}\label{lem:FloquetTransformations}
The Liouville one-form can be transformed to
\begin{equation}\label{eq:wHamiltonian}
\begin{aligned}
		\alpha &= \tfrac{1}{4i}\left( w d\bw - \bw dw \right) - H_0 ds - \sum_{n=1} H_n(s,w,\bw) ds, \\
		H_n &= C_n(s) w^{n+2} + \bar{C}_n(s)\bw^{n+2} + P_n(s,w,\bw),
\end{aligned}
\end{equation}
where $P_n(s,w,\bw)$ is a polynomial in $w,\bw$ of homogeneous degree $n+2$
with coefficients that are $\Tee$ periodic functions in $s$ and dependent on
$C_j,j\leq n$.

Specifically there are functions $\delta$ and $\eta$ such that when the axis is elliptic 
the transformation, using \eqref{eq:iotaDefinition}, is
\[ 
	Z = e^{i\delta}(\cosh(\eta)v - \sinh(\eta) \bar{v}),\qquad
    v = e^{-i(\iota(s)-\iotab_0)}w,  
\]
and $H_0 = \tfrac{1}{2}\iotab_0 w \bw$

Similarly, when the axis is hyperbolic the transformation, using \eqref{eq:nudefinition}, is 
\[
	Z = e^{i\delta}(\cosh(\eta) v - \sinh(\eta) \bar{v}) ,\quad  
	v = \cosh(\nu(s) -  \nu_0 s)w - i\sinh (\nu(s) -  \nu_0 s)\bw ,
\]
and $H_0 = \tfrac{1}{4}\nu_0(w^2+\bw^2)$.
\end{lemma}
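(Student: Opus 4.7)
The plan is to assemble and verify the three-step transformation described in the statement, most of whose ingredients were already derived in the preceding subsection. I would proceed by (i) checking that each of the three transformations is canonical, (ii) computing the effect on $H_0$ to confirm the claimed autonomous forms, and (iii) propagating the transformation through higher-order terms to establish the structural form of $H_n$. Since the equations \eqref{eq:etadiffeq} and \eqref{eq:etadiffeq_hyp} were \emph{designed} to enforce invariance of the leading-order tori/saddles, most of the analytic content is already in hand; the lemma is chiefly a packaging statement.

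First I would verify canonicality step by step. The rotation $Z = e^{i\delta(s)}V$ preserves $\tfrac{1}{2i}dZ\wedge d\bar Z$ up to a $ds$-dependent one-form that is absorbed into a redefinition of $H$; the Bogoliubov scaling $V = \cosh\eta\, v - \sinh\eta\,\bar v$ preserves the spatial symplectic form because $\cosh^2\eta - \sinh^2\eta = 1$, again with an absorbable $ds$-correction coming from $\eta'(s)$; and the final transformations \eqref{eq:iotaRotation} (elliptic) and its hyperbolic analogue are canonical as a $T$-periodic rotation and hyperbolic rotation respectively. Substituting into \eqref{eq:canonComplex} and comparing with \eqref{eq:alphav} shows exactly how the $ds$-corrections from each step recombine to modify $H_0$; the equations \eqref{eq:etadiffeq} and \eqref{eq:etadiffeq_hyp} are precisely the constraints that kill the off-diagonal ($\bv^2$ or $v\bv$) remainders in the transformed $H_0$.

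With $\delta,\eta$ satisfying these ODEs, the preceding derivation reads off $H_0 = \tfrac12(\tfrac12 k + \delta' - \tau)\sech(2\eta)\, v\bar v$ in the elliptic case and $H_0 = \tfrac14(\tfrac12 k + \delta' - \tau)\csch(2\eta)(v^2+\bar v^2)$ in the hyperbolic case. The averaging transformation \eqref{eq:iotaRotation}, using the definition $\iotab_0 = \iota(\Tee)/\Tee$, is the standard trick for replacing a periodic coefficient by its mean, and yields $H_0 = \tfrac12 \iotab_0\, w\bar w$. The hyperbolic analogue, using $\nu_0 = \nu(\Tee)/\Tee$ from \eqref{eq:nudefinition}, gives $H_0 = \tfrac14 \nu_0(w^2+\bar w^2)$.

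For the claim about $H_n$ with $n\geq 1$, the key observation is that every transformation in the chain is \emph{linear} in the phase-space variables with $s$-dependent coefficients that are $\Tee$-periodic (possibly after passing to a double cover, per \cref{thm:floquet}). Consequently the composition maps the space of homogeneous degree-$(n+2)$ polynomials in $(Z,\bar Z)$ with $\Tee$-periodic coefficients into itself, expressed now in $(w,\bar w)$. Isolating the extremal monomials $w^{n+2}$ and $\bar w^{n+2}$ and collecting everything else in $P_n$ yields the asserted decomposition; the $C_n(s)$ inherit the role of free complex-valued $\Tee$-periodic functions from the $A_n$ of \cref{prop:FormalSeriesSolution}. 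The main obstacle, and the only nontrivial bookkeeping, is confirming $\Tee$-periodicity (or $2\Tee$-periodicity in the reflection-hyperbolic subcase) of $\delta$, $\eta$, and of $\iota(s)-\iotab_0 s$ and $\nu(s)-\nu_0 s$; this follows from \cref{thm:floquet} applied to the linear system \eqref{eq:leadingorderdynamics}, together with the choice of $\iotab_0$ and $\nu_0$ as averages that exactly cancel the secular growth.
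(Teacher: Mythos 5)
Your proposal is correct and follows essentially the same route as the paper: the lemma there is stated as a summary of the preceding subsection, which performs exactly your three-step chain (the rotation $Z=e^{i\delta}V$, the canonical scaling $V=\cosh\eta\,v-\sinh\eta\,\bar v$ constrained by \eqref{eq:etadiffeq} or \eqref{eq:etadiffeq_hyp}, and the averaging step \eqref{eq:iotaRotation} or its hyperbolic analogue) to arrive at the stated autonomous forms of $H_0$. Your added remarks on canonicality, on the preservation of homogeneous degree by the linear chain (giving the form of $H_n$ with $C_n$ inherited from the $A_n$ of \cref{prop:FormalSeriesSolution}), and on the periodicity of $\delta$, $\eta$, $\iota(s)-\iotab_0 s$, and $\nu(s)-\nu_0 s$ via \cref{thm:floquet} are consistent with, and make explicit, what the paper leaves implicit.
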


Once the leading order terms of the Hamiltonian $H_0$ are in the 
autonomous form given in \cref{lem:FloquetTransformations}, we are in a position to
determine the Birkhoff normal form for $H$ using \cref{thm:NormalForm} from \cref{sec:FloquetAndNormalForm}.
The result will, of course, depend upon whether the axis is elliptic or hyperbolic, and in
the elliptic case, on whether $\iotab_0$ is rational or not. An example of
a near-resonant normal form for the elliptic case is given in \cref{sec:Examples}.

\section{Example: Curves with Constant Torsion}\label{sec:Examples}

In this section normal form theory is applied to an example curve for a magnetic axis. 
For simplicity, a choice of $r_0$, and the free functions is made to ensure
the Hamiltonian has only finitely many harmonics at each order. In \cref{sec:ConstantCurvatureCurves},
a family of closed curves with simple curvature and torsion functions is described.
Then, in \cref{sec:NormalFormExamples}, a near-axis expansion for these curves is made,
the corresponding Hamiltonian function computed, and the normal form analyzed.

\subsection{Closed curves of constant torsion}\label{sec:ConstantCurvatureCurves}
To obtain a simple form for the magnetic axis, we follow Karcher \cite{karcherClosedConstantCurvature2020}
who obtained a simple, discrete symmetry condition under which a Frenet curve, $r(s)$, is a closed loop.\footnote
{More generally a necessary condition is that the Floquet matrix of the system \eqref{eq:FrenetSerret}
is the identity \cite{Hwang81}, but this is a harder computation.}
To start, assume that the curvature and torsion functions are even, i.e., there is a point $s_j$ such that
\begin{equation}\label{eq:symProperty}
	\kappa(s_j +s) = \kappa(s_j - s), \qquad \tau(s_j +s) = \tau(s_j - s) .
\end{equation}
Such a symmetry implies a corresponding symmetry of the resulting space curve:
	
\begin{lemma}\label{lem:DiscreteSym}
	If $r(s)$ is a Frenet space curve \eqref{eq:FrenetSerret} satisfying \eqref{eq:symProperty} for some $s_j \in \R$, then it is symmetric at $s_j$ with respect to a rotation by $\pi$ about the curvature normal $\hn(s_j)$.
\end{lemma}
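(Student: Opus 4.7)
My plan is to invoke the fundamental theorem of space curves: two $C^3$ curves in $\R^3$ with identical curvature and torsion (as functions of arc length) and identical position plus Frenet frame at one point must be equal. Fix the point $p := r(s_j)$ and the axis $\ell$ through $p$ in direction $\hn(s_j)$, and let $T : \R^3 \to \R^3$ denote rotation by $\pi$ about $\ell$. The whole game is to define a candidate ``symmetrized'' curve and show it coincides with $r$.

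Concretely, I would introduce
\[
	\tilde r(s) := T\bigl(r(2s_j - s)\bigr),
\]
and verify three things. First, $\tilde r$ is parametrized by arc length, with $\tilde r(s_j) = p$ (since $p \in \ell$ is fixed by $T$). Second, the Frenet frame of $\tilde r$ at $s_j$ matches that of $r$: the orientation-reversing reparametrization $s \mapsto 2s_j - s$ flips $\hatt$ and $\hb$ but preserves $\hn$, and $T$, being a rotation by $\pi$ about $\hn(s_j)$, flips $\hatt(s_j)$ and $\hb(s_j)$ while fixing $\hn(s_j)$. The two sign flips combine to give agreement of all three frame vectors at $s_j$. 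Third, the curvature and torsion of $\tilde r$ equal $\kappa(s)$ and $\tau(s)$: isometries of $\R^3$ preserve both invariants; the reparametrization preserves curvature automatically and, because $\tau = \det(r', r'', r''')/\kappa^2$ is cubic-odd in derivatives, it also preserves $\tau$ but shifts the argument, giving $\tilde\kappa(s) = \kappa(2s_j - s)$ and $\tilde\tau(s) = \tau(2s_j - s)$. The evenness hypothesis \eqref{eq:symProperty}, rewritten as $\kappa(2s_j - s) = \kappa(s)$ and $\tau(2s_j - s) = \tau(s)$, closes the loop.

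With those three ingredients, the fundamental theorem forces $\tilde r \equiv r$, which is exactly the claimed rotational symmetry about the line through $r(s_j)$ in direction $\hn(s_j)$.

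The only delicate part is the bookkeeping in step two: one must check that, after the reparametrization $s\mapsto 2s_j-s$, the curve $r(2s_j - s)$ still satisfies the Frenet equations with \emph{positive} curvature $\kappa(2s_j-s)$, so that ``normal'' really means $\hn$ and not $-\hn$; this pins down the sign conventions and makes the two sign flips at $s_j$ cancel as claimed. Everything else is a routine application of the chain rule and the isometry-invariance of arc length, curvature, and torsion, so I do not anticipate other obstacles.
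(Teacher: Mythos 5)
Your proposal is correct and follows essentially the same route as the paper: both arguments rest on the fundamental (existence--uniqueness) theorem for Frenet curves, together with the observation that the orientation-reversing reparametrization $s\mapsto 2s_j-s$ flips $\hatt$ and $\hb$ but fixes $\hn$, so composing with the $\pi$-rotation about $\hn(s_j)$ restores the full frame while the evenness hypothesis restores $\kappa$ and $\tau$. The paper phrases this by comparing the two curves $r_\pm$ generated by $\kappa(\pm s),\tau(\pm s)$ from shared initial data, whereas you build the rotated-reversed candidate $\tilde r$ directly and check its invariants; this is a purely organizational difference, and your explicit verification of the sign bookkeeping at $s_j$ is sound.
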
 
\begin{proof}
Without loss of generality, suppose that $s_j =0$.
For any smooth functions $\kappa(s)$ and $\tau(s)$, there exists
a unique curve $r(s)$, up to Euclidean transformations. 
Denote by $r_+(s)$ the curve with curvature $\kappa_+(s) = \kappa(s)$ and torsion $\tau_+(s) = \tau(s)$
and by $r_-(s)$ the curve with curvature $\kappa_-(s) = \kappa(-s)$ and torsion $\tau_-(s) = \tau(-s)$.
Assuming that both satisfy the same initial conditions,
$r_\pm(0) = 0$, $\hn_\pm(0) =(1,0,0)$, then uniqueness implies that $r_-(s) = r_+(-s)$. It follows that
\[
	\hat{t}_-(0) = -\hat{t}_+(0), \quad \hat{n}_-(0) = \hat{n}_+(0), \quad \hat{b}_-(0) = -\hat{b}_+(0). 
\]
Thus, since $\kappa_+ = \kappa_-$ and $\tau_+ = \tau_-$, and $r_+(0) = r_(0)$
the curves $r_\pm(s)$ must be identical up to a 
$\pi$-rotation about the shared normal $\hn_\pm(0)$, denoted by $R_\pi$. It follows that $ r_+(s) = R_\pi \cdot r_-(s) = R_\pi \cdot r_+(-s) $. 

More generally when $s_j \neq 0$, this symmetry holds for $r(s) = r_+(s-s_j)$.
\end{proof}

If we suppose that there are at least two of these even symmetry points, 
say $s_0 \neq s_1$, for \eqref{eq:symProperty}, then there are infinitely many of them.
\begin{lemma}\label{lem:InfiniteSymPoints}
		If a function has two even symmetry points $s_0 \neq s_{1}$ then it is periodic with period $2(s_1-s_0)$. 
\end{lemma}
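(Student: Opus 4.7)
The plan is to reduce even symmetry at a point to a reflection identity and then compose two reflections to obtain a translation. Specifically, saying that $f$ has an even symmetry point at $s_j$ means $f(s_j + s) = f(s_j - s)$ for all $s$, which, setting $u = s_j + s$, is equivalent to the reflection identity
\begin{equation*}
f(u) = f(2s_j - u) \qquad \text{for all } u \in \R.
\end{equation*}

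Applying this identity at both symmetry points $s_0$ and $s_1$, I obtain $f(u) = f(2s_0 - u)$ and $f(u) = f(2s_1 - u)$. Composing, first reflect about $s_0$ and then about $s_1$: for arbitrary $t$, set $u = 2s_0 - t$. Then $f(t) = f(2s_0 - u) = f(u)$ by the first identity (noting $2s_0 - u = t$), and $f(u) = f(2s_1 - u) = f(2s_1 - 2s_0 + t)$ by the second. Hence
\begin{equation*}
f(t) = f\bigl(t + 2(s_1 - s_0)\bigr) \qquad \text{for all } t \in \R,
\end{equation*}
which is exactly the claim.

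There is no real obstacle here; this is the standard fact that the composition of two reflections (in one dimension, about the points $s_0$ and $s_1$) is translation by twice the signed distance between them. The only subtlety worth flagging in the write-up is that $s_0 \neq s_1$ ensures the period $2(s_1 - s_0)$ is nonzero, so the conclusion is nontrivial. Because the curvature $\kappa$ and torsion $\tau$ of the axis from \cref{lem:DiscreteSym} both satisfy \eqref{eq:symProperty} at the same points, this lemma applies to each, giving the common period of the Frenet data that will be needed to close the curve in the construction of \cref{sec:ConstantCurvatureCurves}.
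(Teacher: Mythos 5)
Your proof is correct and is essentially the same computation as the paper's: both compose the two reflection identities $f(u)=f(2s_0-u)$ and $f(u)=f(2s_1-u)$ to produce the translation by $2(s_1-s_0)$, just written in the opposite direction (the paper starts from $f(s+2(s_1-s_0))$ and reduces to $f(s)$). No gaps.
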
 
\begin{proof}
By a simple calculation,		
\begin{align*}
	f(s+2(s_1-s_0)) &= f(s_1 + (s_1-2s_0+s)) = f(s_1-s_1+2s_0-s)\\
		 &= f(2s_0-s) = f(s_0 + (s_0-s)) = f(s_0-s_0+s)\\
		 & = f(s) ,
\end{align*}
implying that $f$ is periodic with period $2(s_1-s_0)$.
\end{proof}

Thus if both $\kappa$ and $\tau$ are even about two symmetry points, then they are both periodic. Consequently,
if the curve $r(s)$ is not closed, then it has infinitely many even symmetry points. 	
More generally, if the set of symmetry points is assumed discrete, then \cref{lem:InfiniteSymPoints} immediately implies that the full set is countable.

Geometrically, the proof of \cref{lem:InfiniteSymPoints} shows that the entire curve $r(s)$ can be constructed from a fundamental segment $ \{r(s) \,|\, s\in [s_0,s_1]\}$. This is done by first rotating about $\hn(s_1)$ to get a symmetry point $s_2 = 2(s_1-s_0) + s_0$.  A similar rotation about $\hn(s_2)$ then gives $s_3$, and so on. Likewise, rotation about $\hn(s_0)$ gives $s_{-1}$ and so on. 

The discrete symmetry established in \cref{lem:DiscreteSym} can be used to find closed curves. If we can find that there exists some $m \in \N$ such that $r(s_m) = r(s_0)$ then we are guaranteed that $r(s)$ is periodic. 
The following gives a sufficient condition for periodicity.

\begin{proposition}\label{prop:closedSufficientCondition}
Suppose $r(s)$ has curvature and torsion satisfying \eqref{eq:symProperty} for at least two distinct values $s_0$ and $s_1$, and $r(s_0) \neq r(s_1)$. Let $\ell_j$ be the normal line $\{r(s_j) + t \hn(s_j)\ |\ t\in\R \}$. If $\ell_0$ and $\ell_1$ intersect and do so at an angle that is a rational multiple of $\pi$, then $r(s)$ is closed. 
	\end{proposition}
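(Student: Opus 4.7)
The strategy is to construct a finite-order rigid motion $T$ of $\R^3$ that advances the curve by a fixed parameter shift, so that some power $T^q = \mathrm{id}$ forces periodicity of $r(s)$.

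First I would apply \cref{lem:DiscreteSym} at both $s_0$ and $s_1$, which gives the reflection identities $r(s_j + t) = R_{\pi,\ell_j}(r(s_j - t))$, where $R_{\pi,\ell_j}$ denotes $\pi$-rotation about the normal line $\ell_j$. Writing $L := s_1 - s_0$ and substituting the identity at $s_0$ into the identity at $s_1$ shows that for all admissible $t$,
\[
   r(s_0 + 2L + t) = R_{\pi,\ell_1}\bigl(r(s_0 - t)\bigr) = R_{\pi,\ell_1}\,R_{\pi,\ell_0}\bigl(r(s_0 + t)\bigr).
\]
Setting $T := R_{\pi,\ell_1}\circ R_{\pi,\ell_0}$, and using \cref{lem:InfiniteSymPoints} to extend $\{s_0,s_1\}$ to an infinite arithmetic progression of symmetry points $s_j = s_0 + jL$, I would iterate this relation to obtain $r(s_0 + 2kL + t) = T^k\bigl(r(s_0 + t)\bigr)$ for every $k \in \Z$.

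Next, I would identify $T$ as a pure rotation of finite order by invoking a classical fact from Euclidean geometry: the composition of two half-turns about distinct lines meeting at a point $p^\star$ with dihedral angle $\theta$ is a rotation about the common perpendicular through $p^\star$ by angle $2\theta$. The hypothesis that $\ell_0$ and $\ell_1$ intersect ensures $T$ fixes $p^\star$, eliminating any translational (screw) component; the hypothesis $\theta = (p/q)\pi$ with $\gcd(p,q)=1$ then yields that $T$ is rotation by $2\pi p/q$, hence has order exactly $q$. Combining with the iteration gives $r(s_0 + 2qL + t) = T^q\bigl(r(s_0 + t)\bigr) = r(s_0 + t)$, so $r$ is periodic with period $2qL$ and the curve is closed.

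The main obstacle, in my view, is the identification of $T$ as a pure rotation rather than a general orientation-preserving isometry; a screw translation along the rotation axis would obstruct finite order and hence closure. This is precisely where the intersection hypothesis is essential, since two skew normal lines would produce a screw motion about their common perpendicular rather than a rotation. A secondary subtlety is verifying that the dihedral angle between $\ell_0$ and $\ell_1$ really produces a rotation angle of $2\theta$ for $T$; this can be checked by decomposing each half-turn into a pair of mirror reflections across half-planes containing its axis and cancelling the shared middle reflection, after which only the rationality of $\theta/\pi$ is needed to conclude.
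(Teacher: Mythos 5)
Your proof is correct, and it reaches the conclusion by a slightly different (and in one respect tighter) route than the paper. The paper also starts from \cref{lem:DiscreteSym} and \cref{lem:InfiniteSymPoints}, but it then tracks the orbit of the normal lines themselves: it shows each half-turn $R_j$ carries $\ell_{j-1}$ to $\ell_{j+1}$, that all the $\ell_j$ pass through the common intersection point and lie in one plane, and that the angle between consecutive lines is constant, so the sequence of lines is eventually periodic when that angle is in $\pi\Q$; from this it concludes $r(s_m)=r(s_0)$ for some $m$. You instead compose the two half-turns into a single isometry $T=R_{\pi,\ell_1}\circ R_{\pi,\ell_0}$, use the intersection hypothesis (plus Euler's theorem, via the reflection decomposition) to identify $T$ as a pure rotation by $2\theta$ about the common perpendicular through $p^\star$, and conclude $T^q=\mathrm{id}$. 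What your version buys is twofold: an explicit period $2qL$, and the stronger statement $r(s_0+2qL+t)=r(s_0+t)$ for \emph{all} $t$, which is genuine periodicity of the curve. The paper's final step only asserts the coincidence of a single point, $r(s_m)=r(s_0)$, which by itself does not rule out the curve passing through that point with a different frame; your group-theoretic packaging closes that small gap automatically. Your identification of the intersection hypothesis as the mechanism that excludes a screw component is exactly the right reading of why that assumption appears in the statement.
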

\begin{proof}
		From \cref{lem:DiscreteSym}, the discrete maps $R_0, R_1$ corresponding to $\pi$-rotation about respectively $\hn(s_0),\hn(s_1)$, are discrete symmetries of $r(s)$. From \cref{lem:InfiniteSymPoints}, the points $s_j = j(s_1 - s_0) + s_0$ for $j\in \N$ are also symmetry points with discrete maps $R_j$ (that are $\pi$-rotations about $\hn(s_j)$). We claim that if $\ell_0,\ell_1$ intersect at a point $p$ then all $\ell_j := \{r(s_j) + t \hn(s_j) | t\in\R \}$ intersect at $p$. Indeed, the line $\ell_1$ is the fixed set of the map $R_1$. Hence, applying $R_1$ to $\ell_0$ yields another line which will also intersect $\ell_1,\ell_0$ at $p$. But this line is precisely $\ell_2$. By induction, it must be true that all $\ell_j$ intersect at a point $p$. Moreover, since all the rotations are by $\pi$, all these lines lie on the same plane spanned by $\ell_0$ and the vector $r(s_1) - r(s_0)$.
		
		Finally, the angle of intersection between $\ell_0,\ell_1$ is the same as the angle of intersection between $\ell_j,\ell_{j+1}$. Hence, if this angle is a rational multiple of $\pi$, then the discrete mapping of $\ell_{j-1}$ into $\ell_{j+1}$ by $R_j$ will be periodic. Hence, for some $m\in \N$, $r(s_m) = r(s_0)$, implying the curve is closed.
	\end{proof}

The sufficient conditions of this proposition can be reformulated as
\begin{equation}\label{eq:closedCurveConds}
	\cos^{-1}(\hn(s_0) \cdot \hn(s_1)) = \tfrac{p}{q} \pi,\qquad 
	\left(r(s_0) - r(s_1)\right)\cdot\left(\hat{n}(s_0) \times \hat{n}(s_1)\right) = 0,
\end{equation}
for $p,q \in \N$.
Here the first condition guarantees that the angle is rational and the second that the two lines 
$\ell_0$ and $\ell_1$ both lie in a plane containing the points $r(s_0)$ and $r(s_1)$, 
so they necessarily intersect. 

As a simple example with two even symmetry points, \eqref{eq:symProperty},  
consider curves with torsion that is constant and curvature that has a single harmonic:
\begin{equation}\label{eq:SimpleFrenet}
	\tau(s) = \tau_0, \quad \kappa(s) = a_0 + a_1 \sin(s). 
\end{equation}
Clearly, $s_0 = \pi/2$, and $s_1=3\pi/2$ are even symmetry points, and by \cref{lem:InfiniteSymPoints}
so are $s_j = (\tfrac12 + j)\pi$ for each $j \in \Z$.
Approximate values of $(\tau_0,a_0,a_1)$ that give closed curves can be obtained by solving 
the Frenet-Serret equations numerically for $s\in [s_0,s_1]$, for a range of values of 
$(\tau_0,a_0,a_1)$, and applying a Newton iteration to find solutions to \eqref{eq:closedCurveConds}.
One such closed curve, corresponding to the parameter values
\begin{equation}\label{eq:ClosedParams}
	\tfrac{p}{q} = \tfrac{4}{5}, \quad (\tau_0,a_0,a_1) \approx (0.8,0.549008,0.48878) ,
\end{equation}
is shown in \cref{fig:fifthCurve}. Note that this curve has length $\Tee = 10\pi$.
	
	\begin{figure}[ht]
		\centering
		\includegraphics[width=\linewidth]{./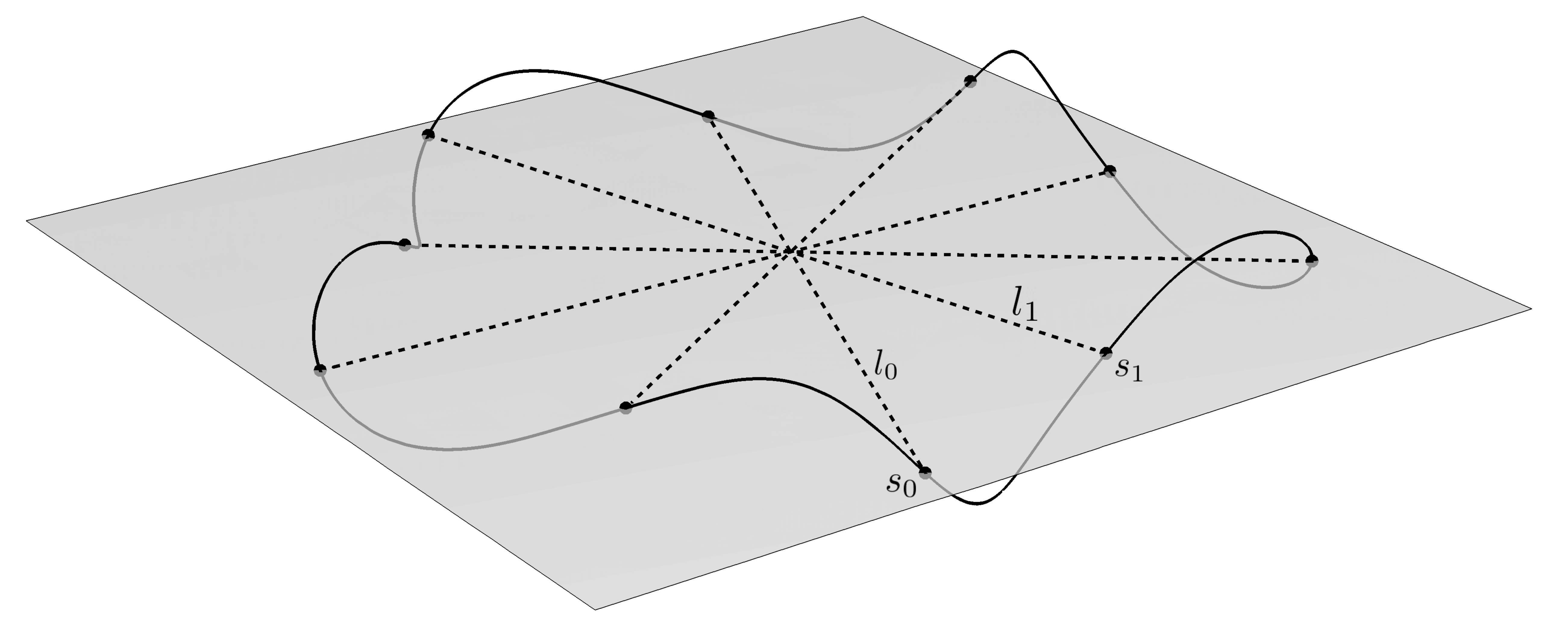}
		\caption{A closed curve satisfying \eqref{eq:closedCurveConds} with parameters \eqref{eq:ClosedParams}. 
		It has a 10-fold  symmetry.}
		\label{fig:fifthCurve}
	\end{figure}

\subsection{Normal form examples}\label{sec:NormalFormExamples}

In this section we give two examples of a normal form calculation to quartic degree in the variables $w,\bw$ of \eqref{eq:wHamiltonian}. We follow the iterative procedure outlined in \cref{sec:FloquetAndNormalForm}. The available free functions to this order, as determined in \cref{sec:FormalHamiltonian}, are summarized in \cref{tab:freeFunctions} for an elliptic axis. 

\begin{table}[ht]
	\caption{Available free functions at a given degree.}
	\label{tab:freeFunctions}
	\begin{tabular}{c|c l}
		Degree 	& Free functions & Conditions \\\hline
		2		& $k$ 		  & real\\
		2		& $B_0(s)$    & real, periodic \\
		2		& $\delta(s)$ & periodic $\mod 2\pi$ \\
		2		& $\eta(s)$   & real, periodic. \\
		3		& $A_3(s)$ 	  & complex, periodic. \\
		4		& $A_4(s)$ 	  & complex, periodic.
	\end{tabular}
\end{table}

We will make a choice of the free functions that simplifies the number of terms appearing in the
Floquet-transformed Hamiltonian (\cref{eq:wHamiltonian}) before computing the normal form. 
Moreover, we will choose the Beltrami constant $k$ so that the on-axis rotation is close to a resonance, $\iotab_0 = 1/3+\eps$. This allows for two comparative normal form examples; the first taking $\iotab_0$ so that there  are  no resonance terms up to degree four, and the second using the near resonant normal form as described in \cref{sec:NearResonant}. 
	
We will use the family $r_0(s)$ with \eqref{eq:SimpleFrenet} with the parameters \eqref{eq:ClosedParams}. At degree zero, choose $B_0(s) = 1$. At degree two, take $\eta(s) = \tfrac12 \ln 2$ and $\delta(s) = s$, see \cref{tab:near13Resonance}. This choice ensures that we can simply compute \eqref{eq:iotaDefinition} as
\begin{equation*}
	\iota(s) = \int_0^s (\tfrac12 k +\delta^\prime - \tau )\sech 2\eta ds = \tfrac45(\tfrac12 k  +1 - \tau_0)s ,
\end{equation*}
a linear function of $s$. As a consequence, the Floquet transformations
(\cref{eq:deltaRotation,eq:etaScaling,eq:iotaRotation}) will produce coefficients of the Hamiltonian \eqref{eq:wHamiltonian} that are periodic functions with only low order harmonics. Moreover, by choosing $k$ appropriately, we can ensure that $\iotab_0 = \iota(T)/T = 1/3+\eps$ for some small value of $\eps$.
	
For the degree 3 and 4 terms, there is a choice of periodic, complex valued functions $A_3(s), A_4(s)$ as given in \cref{prop:FormalSeriesSolution}. We want to ensure that there are some 1:3 resonance terms when the near-resonant normal form is computed in \cref{sec:nearResonantExample}. By \cref{cor:NormalForm}, this amounts to requiring that the coefficient of $w^3$ is non-zero. Hence, we choose $A_3(s)$ so that, after Floquet transformation, the coefficients of $w \bw^2$ and $w^2\bw$ vanish, but the coefficient of $w^3$ and $\bw^3$ do not. Similarly, we will choose $A_4(s)$ so that the coefficients of $w^3\bw, w\bw^3$ vanish. This guarantees as few coefficients at this order as possible.
	
A summary of the choice of free functions is given in \cref{tab:near13Resonance}. Through appropriate substitution of these specific free functions and consequent transformation of \eqref{eq:H0} by the Floquet transformations, the resulting Hamiltonian \eqref{eq:wHamiltonian} to degree-four is of the form,
\begin{equation}\label{eq:exampleHamiltonian}
	\begin{aligned}
		H &= H_0 + H_1^0 + H_2^0,\qquad
		H_0 =\tfrac12(\tfrac13 + \eps) w\bw , \\
		H_1^0 &= \sum_{m=-2}^2 e^{ i m s}\left(b_{3,0,m}w^3 +b_{0,3,m}\bw^3\right) ,\\
		H_2^0 &= \sum_{m=-4}^4 e^{ i m s}\left(b_{4,0,m} w^4 +b_{2,2,m} w^2\bw^2+b_{0,4,m} \bw^4\right).
	\end{aligned} 
\end{equation}
Note that $b_{j,k,m} = \bar{b}_{k,j,-m}$ since $H$ is real.
	
\begin{table}[ht]
	\caption{Choice of free functions for examples.}
	\label{tab:near13Resonance}
	\begin{tabular}{c|c l}
		Degree 	& Free functions & Choice \\\hline
		2		& $k$ 		  & $\tfrac16 (-7 + 15 \eps + 12 \tau_0)$\\
		2		& $B_0(s)$    & $ 1$ \\
		2		& $\delta(s)$ & $s$ \\
		2		& $\eta(s)$   & $\tfrac12 \ln 2$ \\
		3		& $A_3(s)$ 	  & so that coefficients of $w\bw^2, w^2\bw$ vanish \\
		4		& $A_4(s)$ 	  & so that coefficients of $w\bw^3,w^3\bw$ vanish
	\end{tabular}
\end{table}


\subsubsection{Example 1: non-resonant normal form}
Using the choice of functions given in \cref{tab:near13Resonance}, we will compute the non-resonant normal form for the Hamiltonian \eqref{eq:exampleHamiltonian}. We follow \cref{sec:TimeDependentNormal} in moving to extended phase space and using the extended bracket, \eqref{eq:ExtendedPB}, $\{\cdot, \cdot\} + \partial_s$.
The following calculation will hold provided $\iotab_0 \neq \tfrac{p}{q}$ with $q \le 4$. 

At the $d^{th}$ iteration in the normal form procedure, we need to solve the homological equation
\[ 
	\{F_d, H_0 \} + \partial_s F_d = H_d^{d-1} - H_d^{d} ,
\]
where the first term represents the degree $d$ terms after the previous $d-1$ normal form transformations have been made, and the second contains the resonant (or irremovable) terms.
That is, at each order the Hamiltonian will have the form
\[
	H_d^{d-1} = \sum_{|m| \le d} \sum_{j+k=d} {b}_{j,k,m} e^{i m s} w^j \bw^k ,
\]
with $ {b}_{j,k,m} = \bar{{b}}_{k,j,-m}$ since $H$ is real.
It is important to note that each coefficient $b_{j,k,m}$ must be computed by applying all the previous, lower order, transformations (see \cref{sec:NormalFormAppendix}).

Letting the degree-$d$ terms have the power series representation
\[ 
	F_d = \sum_{|m| \le d} \sum_{j+k=d} F_{j,k,m} e^{i m s} w^j \bw^k. 
\]
then, solving the homological equation, term-by-term gives
\begin{equation}\label{eq:FSol}
	F_{j,k,m} = -i \frac{{b}_{j,k,m}}{(k-j)\iotab_0 + m} ,
\end{equation}
providing there is no resonance: $(k-j)\iotab_0 + m \neq 0$. 
Any ``resonant'' terms are not removed, and accumulated in the normal form Hamiltonian $H_d^d$.
		
With $\eps$ taken so that $\iotab_0 \neq \tfrac{p}{q}$ with $q \le 4$, the degree $4$ Hamiltonian is given as,
\begin{equation}\label{eq:example1NormalForm}
	{H}^4 = H_0^0 + H_1^1 + H_2^2 = \tfrac12 \iotab_0 w \bw + {b}_{2,2,0} w^2 \bw^2.
\end{equation} 
Note that ${b}_{2,2,0} \in \R$ as $H$ is real.
As there are no resonant terms in the normal form \eqref{eq:example1NormalForm}, and consequently it does not depend on $s$, then $\tilde{H}$ is an approximate integral for the system. By pulling back this approximate integral by the normal form transformation, we obtain an integral in the original Floquet coordinates $w, \bw$. The resulting level sets of this function at a slice $s = 0$ is given in \cref{fig:nonresplt} for a value of $\eps = 0.01$. Moreover, some trajectories of the degree four Hamiltonian \cref{eq:exampleHamiltonian} are plotted for comparison.	
		
		\begin{figure}[ht]
			\centering
			\includegraphics[width=0.75\linewidth]{./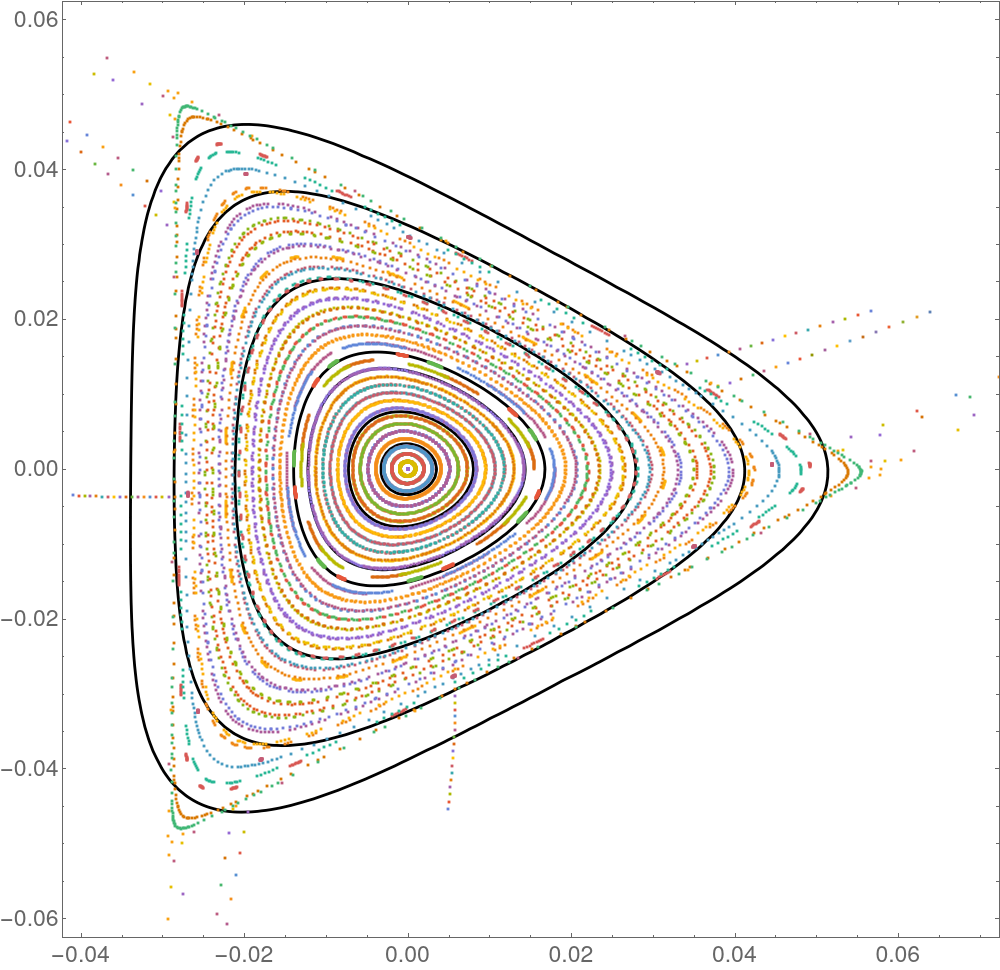}
			\caption{In black are the contour lines of the approximate invariant $
				\tilde{H}$ at a slice $s=0$ and for $\eps = 0.01$. Overlaid is a Poincar\'e plot of the orbits of the Hamiltonian system \eqref{eq:exampleHamiltonian}.}
			\label{fig:nonresplt}
		\end{figure}	
	
\subsubsection{Example 2: near-resonant normal form}\label{sec:nearResonantExample}
As in the previous example, we will make a choice of functions as given in \cref{tab:near13Resonance}. In contrast however, we will compute the near-resonant normal form for the Hamiltonian \eqref{eq:exampleHamiltonian}. That is we will assume $\eps$ to be small and follow the theory outlined in \cref{sec:NearResonant}.
		
As before, the general solution for the normalizing functions $F_d$ is given \cref{eq:FSol}.
As we are essentially taking $\iotab_0 = 1/3$, at degree 3 there is a resonant term
\[ 
	H_1^1 = b_{3,-1} e^{- i s}w^3 +\bar{b}_{3,-1} e^{is}\bw^3.
\]
		
The degree $4$ terms of the normal form, $H_2^2$, will only contain the irremovable term $\tilde{b}_{2,2,0}\in \R$, similar to the previous example. Here $\tilde{b}_{2,2,0}$ is the coefficient of $w\bw$ after the transformation generated by $F_1$ has been applied to $H^0$ to give $H^1$. The degree $4$ normal form becomes
		\begin{equation}\label{eq:example2NormalForm}
			{H}^4 = H_0 + H_1^1 + H_2^2 = \tfrac12 \iotab_0 w \bw + b_{3,0,-1} e^{- i s}w^3 +\bar{b}_{3,0,-1} e^{is}\bw^3 +  \tilde{b}_{2,2,0} w^2 \bw^2.
		\end{equation} 
		
The resonant terms in the normal form \eqref{eq:example2NormalForm} prevent the system from being autonomous. As a consequence, the normal form itself is not an integral. However, through a rotation, $\tilde{w} = e^{-1/3 i s} w$ we obtain the approximate integral
\begin{equation}\label{eq:example2Integral}
	J = \tfrac12 \eps w \bw + b_{3,-1} \tilde{w}^3 +\bar{b}_{3,-1} \bar{\tilde{w}}^3
		 +  b_{2,2,0} \tilde{w}^2 \bar{\tilde{w}}^2.
\end{equation} 
By pulling back this approximate integral by the rotation and the normal form transformation, we obtain an integral in the original Floquet coordinates $w, \bw$. The resulting level sets of this function at a slice $s = 0$ is given in \cref{fig:resplt} for a value of $\eps = 0.01$. Moreover, some trajectories of the degree $4$ Hamiltonian \cref{eq:exampleHamiltonian} are plotted for comparison.
		
		\begin{figure}[ht]
			\centering
			\includegraphics[width=0.75\linewidth]{./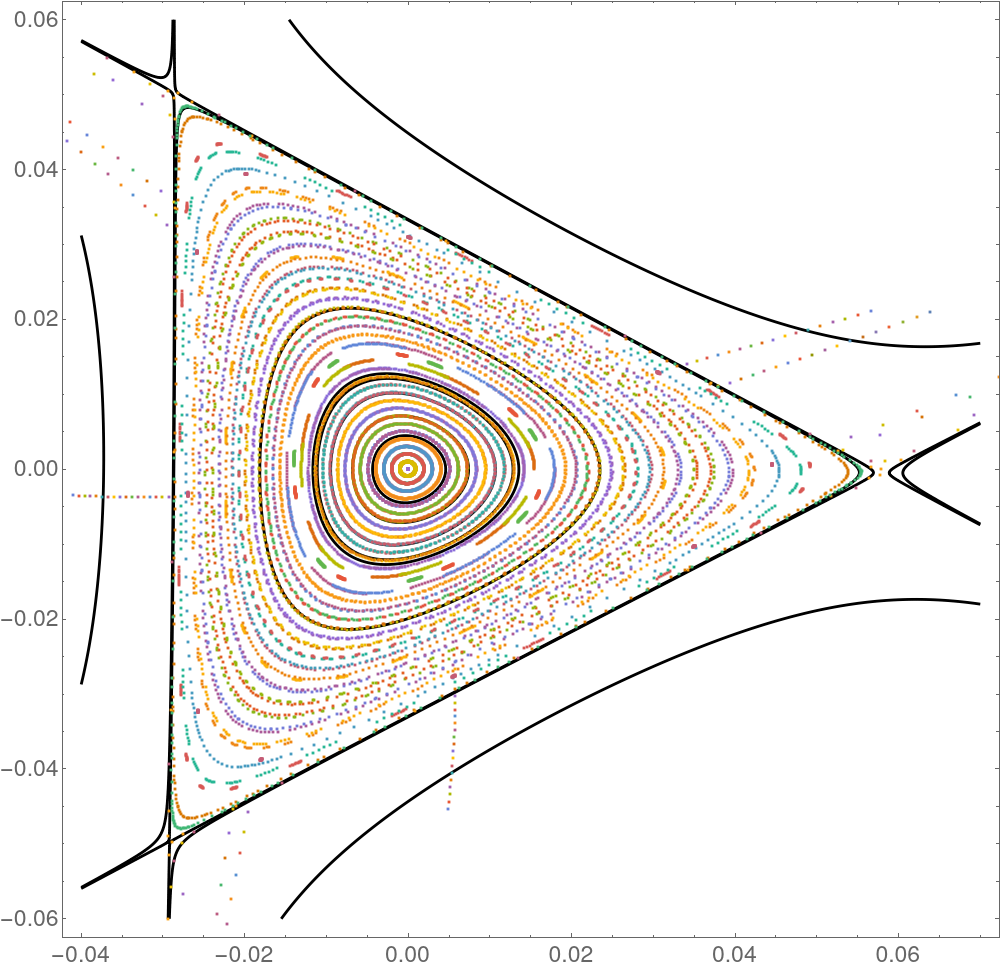}
			\caption{In black are the contour lines of the approximate invariant $J$ at a slice $s=0$ and for $\eps = 0.01$. Overlaid is a Poincar\'e plot of the orbits of the Hamiltonian system \eqref{eq:exampleHamiltonian}.}
			\label{fig:resplt}
		\end{figure}	

\section{Concluding Remarks}\label{sec:conclusion}
In this paper, we studied near-axis expansions for Beltrami and vacuum magnetic fields. These were 
introduced through the lens of differential forms, motivating a Hamiltonian perspective and facilitating the application of Floquet and normal form theory. Ultimately these techniques gave a way to iteratively compute simple coordinates and an approximate integral for fields near a magnetic axis. We gave two examples, the first analyzed through a regular normal form, and the second through a near-resonant normal form.

The language of differential forms reveals how Beltrami or vacuum fields gives a manifold a contact or cosymplectic structure, respectfully. The fact that Beltrami fields form contact structures has been used previously to establish topological properties of Beltrami fields (for example see \citeInline{etnyreContactTopologyHydrodynamics2000a,encisoBeltramiFieldsExhibit2020,cardonaConstructingTuringComplete2021}). As far as we are aware, the cosymplectic structure of a vacuum field has yet to be explored. We hope that future work may further illuminate this subtle difference.

In this paper, several generalizations were made that permit the application of near-axis expansions to a wider variety of configurations. Firstly, we showed how to construct a rotation minimizing frame without a Frenet-Serret frame, using the ideas of Bishop \cite{Bishop75}. Consequently, it is possible to study axes with points of vanishing curvature while still using the traditional framework of Mericer \cite{Mercier64}. Secondly, we demonstrated how to implement a near-axis expansion for a hyperbolic axis, which may prove useful for the study of divertors \cite{Boozer15,Boozer18}. Finally, we were able to carry out these expansions without assuming the existence of flux surfaces or of non-resonance. Nevertheless, we constructed approximate integrals using the normal form. This contrasts the common claim that an existence requirement for flux surfaces is non-resonance of the axis \cite{Jorge20a}.

As we demonstrated, normal form theory gives a way to understand the possible topology of magnetic fields near an axis, in particular for those near or at resonance. The resonant normal form will allow the investigation of  properties, such as omnigenity or quasisymmetry, near island chains and hyperbolic axes. This may enlarge the category of configurations with these properties.

Moreover, as demonstrated by comparing \cref{fig:resplt} and \cref{fig:nonresplt}, a near-resonant normal form can give a better approximation of flux surfaces as well as help locate any separatrices. 

Finally, this paper applies the near-axis expansion, for the first time, to Beltrami fields. Such fields have recently become the central point in computing stepped-pressure MHD equilibria that have open regions with $\nabla p = 0$ where the field is Beltrami, e.g. \cite{Hudson20}. Although Beltrami fields are generically chaotic, the existence of even approximate flux surfaces would be advantageous. Our techniques will be useful to construct fields with such flux surfaces.

\begin{appendix}

\section{Normal Form Theory}\label{sec:NormalFormAppendix}
In this section we will give a  proof of \cref{thm:NormalForm}. There are many proofs in the literature (see, for instance, \citeInline{meyerIntroductionHamiltonianDynamical2009}). However, because these proofs are constructive, and are used in the computations in \cref{sec:FloquetAndNormalForm} and \cref{sec:NormalFormExamples}, it is of use to be more explicit.
	
As outlined in \cref{sec:FloquetAndNormalForm}, normal form theory seeks a choice of canonical coordinates near a periodic orbit for which the Hamiltonian takes its ``simplest'' form. We want a way to easily apply canonical transformations to the Hamiltonian, and study how this transformation modifies the Hamiltonian at each degree in the Taylor series expansion of $H$ about the periodic orbit. As pioneered by \citeInline{dragtLieSeriesInvariant1976}, one relatively easy way to do this is through the method of Lie series.
	
This method takes advantage of the fact that the set of diffeomorphisms $\Diff(M)$ on 
a manifold $M$ is a Lie group under composition. 
The corresponding Lie algebra $\chi(M)$ is the vector space of complete vector fields on $M$. 
This gives an efficient method for computing the action of a flow of a Hamiltonian vector field
on a function.
Explicitly if $\vphi_{X}^t, \epsilon\in\R,$ is the time-$t$ flow of a vector field 
$X\in\chi(M)$, then the action of $\vphi_{X}^t$ on a function $f$ is given by the \emph{Lie series}
\begin{equation}\label{eq:LiePullBack}
	\vphi_{X}^{t*} f = \exp(t X)f := \sum_{k=0}^\infty \tfrac{t^k}{k!}(X^k f) ,
\end{equation}
where the vector field $X = X^i \partial_{i}$ can be thought of as the usual Lie derivative 
operator that acts on functions. 

For the case of Hamiltonian vector fields, the flow will be a symplectic transformation. These can be generated using the Poisson bracket $\{\cdot,\cdot\}$ on the manifold $M$, recall \eqref{eq:PoissonBracket}.
Given $H \in C^\infty(M, \R)$, the Hamiltonian vector field is $X_H = \{\cdot,H\}$,
and the Lie derivative operator becomes
\[
	L_H := \{\cdot, H \} = X_H.
\]
Moreover, it can be shown that the Lie bracket for Hamiltonian vector fields is given by
\[
	[X_G,X_F] = X_{\{F,G\}} .
\]
The point is that the map $F \mapsto \{\cdot, F\}$ is a Lie algebra homomorphism.

Thus, given a Hamiltonian function $F:M\to\R$, there is a
Hamiltonian vector field $X_F = \{\cdot,F\}$, which in turn, can be
used to generate a symplectic transformation $ \vphi_{X_F}^t$, the time $t$ flow of $X_F$. 
Consequently, the action on a function $H$ by can be computed using Lie series through
\eqref{eq:LiePullBack}, that is,
\[ 
	\vphi_{X_F}^{t*} H = \exp( L_F )H.
\]
Through this relation, we never have to deal directly with the flow $\vphi_{X_F}$, or even the vector field $X_F$, we can simply use Hamiltonian $F$ and the Poisson Bracket operator $L_F$, to compute the transformation of $H$.
	
We will proceed by first recalling the normal form transformation for an autonomous Hamiltonian, before extending it to the non-autonomous case.
	
\subsection{Time Independent} \label{sec:time-independent} 
Assume that $H$ is independent of time and has an equilibrium, without loss of generality, at the origin.
Let $(x,y)\in T^*\R^n$ be local canonical coordinates and $\{\cdot,\cdot\}$ the canonical Poisson bracket \eqref{eq:PoissonBracket}.
	
We review here the iterative procedure to transform
the Hamiltonian $H$ to a normal form and produce the required transformation. To start this procedure, we expand $H$ into a Taylor series about the origin 
\[
	H = H_0 + \epsilon H_1^0 + \epsilon^2 H_2^0 + \dots ,
\] 
denoting the degree of the homogeneous component $H_j$ using a subscript.
We will make the non-degeneracy assumption that $H_0$ is degree two, and so that a lower index $j$ 
indicates a degree $2+j$ polynomial in $(x,y)$. The upper index---on the higher order terms---will denote the step in the normal form procedure. We omit this for $H_0$, as it will stay fixed. 
The $\epsilon$ is introduced purely for bookkeeping.
	
For the general normalization step, consider a $j+2$ Hamiltonian $\epsilon^jF_j$ 
with degree $j+1$ vector field $\epsilon^j X_{F_j}$.
Since the $\epsilon^j$ just scales time, the corresponding ``time-one" flow becomes the symplectic transformation $\vphi_{X_{j}}^{\epsilon^j}$.
Using \cref{eq:LiePullBack} this transforms the Hamiltonian $H$ to a new form $\tilde{H}$, given by,
\begin{align*}
	\tilde{H} &= \exp(\epsilon^{j}L_{F_j}) (H_0 + \epsilon H_1^0 + \dots) \\
	&= H_0 + \epsilon H_1^0 + \dots + \epsilon^j\left( L_{F_j}H_0 + H_{j}^0 \right) + O(\epsilon^{j+1}).
\end{align*}
Note that the lowest-order effect of this transformation is to 
transform the degree $j+2$ homogeneous component of $H$, namely $H_j^0$. 
The corresponding equation at this order is
\[	
	H^1_j = L_{F_j}H_0 + H_j^0 = \{H_0,F_j \} + H_j^0.
\]

This equation, when thought of as an equation to determine the desired $F_j$:
\begin{equation}\label{eq:Homological}
	L_{H_0}F_j = \{F_j,H_0 \} = H_j^0 - H_j^1,
\end{equation}
is referred to as the \textit{homological equation}.  
The central object of study is now the linear operator $L_{H_0} = \{\cdot,H_0\}$.
Letting $\H_j$ be the vector space of degree-$j$ homogeneous polynomials in $(x,y)$, then,
since $H_0$ is quadratic,  
\[
	L_{H_0}:\H_{j} \to \H_{j}.
\]

Ideally $F_j$ would be chosen so that $H_j^1 = 0$, and the resulting Hamiltonian would
have no order-$j$ terms. However, this is only possible if 	$H_j^0 \in \Ima(L_{H_0})$. 
However, if $L_{H_0}$ is not onto, then there can be components of $H_j^0$ not in $\Ima L_{H_0}$. 
For the case that the linearized matrix $D J\nabla H_0|_{q=p=0}$ is diagonalizable (recall \eqref{eq:HamSys}), then so is the operator $L_{H_0}$\cite{meyerIntroductionHamiltonianDynamical2009}. 
Under this assumption, it is always possible to write
\[
	\H_{j} = \Ima (L_{H_0} ) \oplus \Ker(L_{H_0}).
\]
Then if we choose $H_j^1$ to be the projection of $H_j^0$ onto $\Ker(L_{H_0})$, the homological equation
\eqref{eq:Homological} can be solved for $F_j$. Of course, then the terms $H_j^1$, the ``resonant terms,''
remain in the transformed Hamiltonian $H^1$.

The normal form procedure thus begins by diagonalizing the matrix associated with the quadratic $H_0$. 
To normalize the cubic terms, $H_1^0$, we act on $H$ by the flow generated by a degree-three Hamiltonian $F_1$, or equivalently its degree-two vector field $X_1$. The transformed Hamiltonian becomes
\[ 
	 H^1  = \exp(\epsilon L_{F_1})H = H_0 + \epsilon H_1^1 + \epsilon^2 H_2^1 + \dots.  
\]

The next step in the iterative procedure is to normalize the order two terms, using a transformation generated by a Hamiltonian $\epsilon^2 F_2$. The result with be 
\[
	H^2 = H_0 + \epsilon H_1^1 + \epsilon^2 H_2^2 + \dots,
\]
where the order zero and order one terms remain unchanged.
To do this, we must solve the homological equation
\[ 
	L_{H_0}F_2 = H_2^1 - H_2^2.
\]
Again we choose $H_2^2$ so that $H_2^1-H_2^2 \in \Ima(L_{H_0})$ to compute $F_2$.

Continuing in this fashion, we can obtain the $j^{th}$ order Hamiltonian by iteration of $F_k$ for $k=1,\dots,j$, namely
\[ 
	H^j = H_0 + \epsilon H_1^1 + \dots + \epsilon^j H_j^j + \ldots.
\]
The transformation $\Phi_j$ bringing $H$ to $j^{th}$ order normal form is given by the Lie series
\[
	\Phi_j = \exp(\epsilon^j L_{F_j})\exp(\epsilon^{j-1} L_{F_{j-1}})\cdots\exp(\epsilon L_{F_1}).
\]
This can be computed as a series expansion. Note that the inverse transformation is easily computed as
\[
	\Phi_j^{-1} = \exp(-\epsilon L_{F_1})\cdots \exp(-\epsilon^{j-1} L_{F_{j-1}})\exp(-\epsilon^j L_{F_j}). 
\]

\subsection{Time Periodic}\label{sec:TimeDependentNormal}

We will now present an outline of the time dependent normal form.
Consider a Hamiltonian that depends periodically on time $t$ so that
$H:M\times S^1\to\R$. As before, we assume that we are given canonical
coordinates $(x,y)\in T^*\R^n$ near an orbit $r_0: S^1 \to M$ 
that has period $\Tee$. We assume that coordinates (e.g. using Floquet theory) have been chosen so that
$r_0(t) = (0,0)$.

Now, decompose $H$ into its various homogeneous terms by expanding in a Taylor
series in $(x,y)$:
\begin{equation} \label{eq:HamHomogeneousComponents}
	H(x,y,t) = H_0(x,y,t) + \epsilon H_1(x,y,t) + \dots .
	\end{equation}
As before, $\epsilon$ is introduced purely for bookkeeping.

We assume that the Hamiltonian is in Floquet coordinates so that $H_0 = H_0(x,y)$ is independent of time $t$ (see \cref{sec:FloquetAndNormalForm}) and is quadratic.
We would now like to do the same as in the autonomous case, namely, simplify the system
by a near identity, canonical coordinate transformation. Unfortunately, the time dependence prevents us from
directly using the Lie derivatives $L_{H_j}$. However, this problem can be circumnavigated
by moving to extended phase space.

A point in extended phase space $(x,y,t,E) \in M \times S^1 \times \R$ has the energy variable $E$ as its fourth coordinate. The Hamiltonian is now
\[
	\tilde{H} = \tilde{H}_0 + \epsilon H_1 + \ldots, \qquad  \tilde{H}_0 := H_0+ E,
\]
and the Poisson bracket becomes
\[
	\widetilde{\{\cdot,\cdot\}} = \{\cdot,\cdot\} + \{\cdot,\cdot\}_E,\qquad
	\{F,G\}_E:= \partial_t F \partial_E G - \partial_E F \partial_t G.
\]
The normal form transformations do not need to change the time coordinate, so we consider vector fields 
$X = \widetilde{\{\cdot,F\}}$.
that are zero in the time-direction. As a consequence, the corresponding Hamiltonian, $F$, 
in the extended phase space is independent of $E$, so that
\begin{equation}\label{eq:ExtendedPB}
\begin{split}
	\tilde{L}_{H_0} &:= \widetilde{\{F,\tilde{H}_0\}} = \{F,H_0\} + \{F, E\}_E\\
		            &= \{F,H_0\} + \partial_t F .
\end{split}
\end{equation}

As before, the transformation at order $j$ corresponds to a Hamiltonian $\epsilon^j F_j$ that is degree $j+2$ in $(x,y)$, but now has $\Tee$ periodic coefficients. This gives a
degree $j+1$ vector field $\epsilon^j X_j$, and generates the transformation 
$\vphi_{X_j}^{\epsilon^j}(x,y,t,E)$.

Now, from the computation in \cref{sec:time-independent}, application of the
symplectic transformation $\vphi$ produces the homological equation
\[
	\tilde{L}_{H_0}F_j = H_j^{j-1} - H_j^j . 
\]
It follows that, in the time dependent case, the appropriate linear operator is now $\tilde{L}_{H_0}$,
and it is $\ker{\tilde{L}_{H_0}}$ that determines the resonant terms in the normal form.

The normal form procedure is carried out in \cref{sec:FloquetAndNormalForm}, following the autonomous case. The only 
difference is the modification of the homological equation and that $\tilde{L}_{H_0}$ must be used in computing the Lie series of \cref{eq:LiePullBack}.

\section{Explicit Expansions for the Vector Potential and Hamiltonian}\label{sec:ExplicitExpressions}

The explicit solution to the recursive equations \eqref{eq:BeltramiSeriesEqns} to order $n=4$ is given as
\begin{align*}
	\alpha_s^2 &= A_2 z^2 + \bar{A}_2 \bz^2 - \tfrac14 k B_0 z \bz ,\\
	\alpha_s^3 &= A_3 z^3 + \bar{A}_3 \bz^3 + z \bz \left( R_{2,1} z + \bar{R}_{2,1} \bz  \right) ,\\
	\alpha_s^4 &= A_4 z^4 + \bar{A}_4 \bz^4 + R_{2,2}z^2 \bz^2 + z\bz\left( R_{3,1} z^2 + 
	\bar{R}_{3,1}\bz^2 \right) ,\\
	\alpha_z^0 &= B_0(s) ,\\
	\alpha_z^1 &= \tfrac{1}{3} B_0 \left(\bar{\kappa}_z z +  \kappa_z\bz \right) ,\\
	\alpha_z^2 &=  \tfrac14 \re\left(\left( B_0\kappa_z^2 + 4 A_2(k+2\tau) 
	+ 4 i A_2^\prime \right)z^2\right) - 
	\tfrac18\left( B_0\left( k^2 - 2|\kappa_z|^2 \right) + B_0^{\prime\prime} \right) z \bz ,\\
	\alpha_z^3 &= Q_{3,0} z^3 + \bar{Q}_{3,0} \bz^3 + z\bz\left( Q_{2,1} z + \bar{Q}_{2,1} \bz \right) ,
\end{align*}
where
\begin{align*}
	R_{2,1} &= -\tfrac{1}{4} A_2 \kappa_z +\tfrac{1}{96} \left(5 i \bar{\kappa}_z  B_0^\prime+
	B_0 \left(\bar{\kappa}_z  \left(3 k+2 \tau\right)+2 i \bar{\kappa}_z^\prime\right)\right) ,\\
	R_{2,2} &= -\tfrac{1}{16} \re\left( A_2 \kappa_z ^2\right)+
	\tfrac{1}{128} \left(2 k B_0^{\prime\prime}+B_0 \left(2 k^3+|\kappa_z| ^2 \left(k-2 \tau\right) 
	- 2\im\left(\bar{\kappa}_z\kappa_z^\prime\right)\right)\right) ,\\
	R_{3,1} &= \tfrac{1}{48} A_2 \left(-3 |\kappa_z| ^2-4 \left(k^2+\tau \left(k-2 \tau\right)-
	i \tau^\prime\right)\right)-\tfrac{1}{4} A_3 \kappa_z - \tfrac{1}{24} A_2^{\prime\prime}, \\
	&\qquad +\tfrac{1}{384} \bar{\kappa}_z  \left(9 i \bar{\kappa}_z  B_0^\prime+B_0 \left(\bar{\kappa}_z  \left(3 k+10 \tau\right)+ 10 i \bar{\kappa}_z^\prime\right)\right) -\tfrac{1}{24} i (k - 4 \tau) \bar{A}_2^\prime , \\
	Q_{3,0} &= \tfrac{1}{15} A_2 \left(3 \bar{\kappa}_z  \left(k+4 \tau\right)+2 i \bar{\kappa}_z^\prime\right)+
	\tfrac{2}{5} A_3 \left(k+3 \tau\right)+\tfrac{1}{20} B_0 \bar{\kappa}_z^3 + \tfrac13 i\bar{\kappa}_z A_2^\prime + \tfrac25 i A_3^\prime ,\\
	Q_{2,1} &= \tfrac{1}{10} A_2 \left(\kappa+z  \left(k+5 \tau\right)+i \kappa_z^\prime\right)+  \tfrac{3}{10}i \kappa_z A_2^\prime + 
	\tfrac{1}{80} \left(-9 \bar{\kappa}_z  B_0^{\prime\prime}+i \bar{\kappa}_z  B_0^\prime \left(2 k+7 \tau\right) \right. \\
	&\qquad \left.-
	7 B_0^\prime \bar{\kappa}_z^\prime+B_0 \left(\bar{\kappa}_z  \left(-3 k^2+k \tau+2 i\tau^{\prime}+2 \tau^2\right) \right.\right. \\
	&\qquad\left.\left.+
	i \bar{\kappa}_z^\prime \left(k+4 \tau\right)-2 \bar{\kappa}_z^{\prime\prime}+12 |\kappa|^2\bar{\kappa}_z\right) \right) . 
\end{align*}
Here the $A_j$ are $\Tee$ periodic functions of $s$, and $\kappa_z = e^{-i\gamma(s)} \kappa_u$. Note that, if $\gamma$ can be taken as the integral torsion \eqref{eq:gamma}, we have $\kappa_z = \bar{\kappa}_z= \kappa$ the curvature and $\tau$ the torsion of $r_0$.

To degree four, the Hamiltonian \eqref{eq:Hamiltonian} is given explicitly by
\begin{align*}
		H_0 &= -\bar{A}_2 B_0^{-1} Z^2 - A_2 B_0^{-1} \bZ^2 + \left(\tfrac14 k-\tfrac12 \tau \right) Z\bZ ,\\ 
		H_1 &= H_{3,0} Z^3 + \bar{H}_{3,0} \bZ^3 + Z\bZ (H_{2,1} Z + \bar{H}_{2,1} \bZ) ,\\
		H_2 &= H_{4,0} Z^4 + \bar{H}_{4,0} \bZ^4 + Z\bZ (H_{3,1} Z^2 + \bar{H}_{3,1} \bZ^2) + H_{2,2} Z^2 \bZ^2 ,
\end{align*}
where
\begin{align*}
	H_{3,0} &= -B_0^{-3/2}(\bar{A}_3 - \tfrac13 \bar{A}_2\kappa_z ), \\ 
	H_{2,1} &= \tfrac{1}{96} B_0^{-3/2}\left( 56 \bar{\kappa}_z \bar{A}_2 + 5 i \kappa_z B_0^{\prime} - B_0\kappa_z(11 k + 2\tau) + 2 i B_0 \kappa_z^\prime \right) ,\\
	H_{4,0} &= -B_0^{-2}\left( \tfrac{1}{24}\kappa_z^2\bar{A}_2 - \tfrac12 \kappa_z \bar{A}_3 + \bar{A}_4 - \tfrac12 B_0^{-1} \bar{A}_2\left( \bar{A}_2\left( k+2\tau \right)-i \bar{A}_2^\prime \right) \right) ,\\
	H_{3,1} &= -\tfrac{1}{384}B_0^{-3}\left(\bar{A}_2 \left(48 B_0^{\prime\prime}+8 B_0 \left(8 \left(k^2+ \tau \left(k+\tau\right)\right)-4 i \tau'(s)+7|\kappa_z|^2\right)\right)\right) ,\\
	&\qquad 	+\tfrac{1}{12} i B_0^{-2} \bar{A}_2^\prime \left(k+2 \tau\right) + \tfrac34 B_0^{-2} \bar{\kappa}_z \bar{A}_3+\tfrac{1}{24} B_0^{-2}\bar{A}_2^{\prime\prime} ,\\
	&\qquad +\tfrac{1}{384}B_0^{-2} \kappa_z \left(B_0 \left(\kappa_z \left(7 	k-6 \tau\right)+6 i \kappa_z^\prime\right)-i \kappa_z B_0^{\prime}\right) , \\
	H_{2,2} &= \tfrac{1}{64}B_0^{-1}k^3 + \tfrac{1}{64} k B_0^{-2}B_0^{\prime\prime} + \tfrac{1}{384}B_0^{-1}|\kappa_z|^2\left(17k+14\tau\right) +B_0^{-3}|A_2|^2(k+2\tau) ,\\
	&\qquad - \tfrac{13}{48} B_0^{-2} \re\left( \kappa_z^2A_2 \right) +\tfrac{14}{384}B_0^{-1} \im\left( \bar{\kappa}_z\kappa_z^\prime \right) - B_0^{-3}\im\left( \bar{A}_2A_2^\prime \right). 
\end{align*}

\end{appendix}

\section*{Acknowledgements}
The authors acknowledge support of the Simons Foundation through grant \#601972 
``Hidden Symmetries and Fusion Energy." 
Useful conversations with J. Burby, C. Carley, R. Jorge, M. Landreman, R.S. MacKay, W. Sengupta, and E. Rodriguez are gratefully acknowledged.

\bibliography{NearAxisExpansion}
\bibliographystyle{plain}

\end{document}